\documentclass[conference, balanced]{IEEEtran}
\IEEEoverridecommandlockouts
\usepackage{mathtools}
\usepackage{amsmath,amssymb,amsthm,mathrsfs}
\usepackage{dsfont}
\usepackage{psfrag}
\usepackage{algorithmicx}
\usepackage{algorithm}
\usepackage{algpseudocode}
\usepackage{nccmath}
\usepackage{bm}
\usepackage{color}
\usepackage{cite}
\usepackage{stfloats}
\usepackage{float}
\usepackage{lipsum}
\usepackage{multicol}
\usepackage{hyperref}
\usepackage{pst-poly,pst-node,pst-plot,pst-grad}
\usepackage{multirow}
\usepackage{cite,url}
\usepackage{enumerate}

\usepackage[enable-write18, interaction=nonstopmode]{auto-pst-pdf}
\usepackage{ifplatform, pst-pdf, xkeyval}
\usepackage[T1]{fontenc}

\usepackage{lipsum}

\newcounter{MYtempeqncnt}
\linespread{1}
\allowdisplaybreaks[2]

\theoremstyle{definition}
\newtheorem{definition}{Definition}
\newtheorem{theorem}{Theorem}
\newtheorem{lem}{Lemma}
\newtheorem{corol}{Corollary}

\theoremstyle{remark}
\newtheorem{remark}{Remark}

\algnewcommand{\algorithmicand}{\textbf{ and }}
\algnewcommand{\algorithmicor}{\textbf{ or }}
\algnewcommand{\OR}{\algorithmicor}
\algnewcommand{\AND}{\algorithmicand}


\title{The Optimal Power Control Policy for an Energy Harvesting System with Look-Ahead: Bernoulli Energy Arrivals}

\author{Ali~Zibaeenejad,~\IEEEmembership{Member,~IEEE}, and~Jun~Chen
\thanks{The authors are with electrical and computer engineering department, McMaster University,
Hamilton, ON L8S 4L8, Canada (e-mail: \{azibaeen, chenjun\}@mcmaster.ca). Also, the first author is with
the school of electrical and computer engineering, Shiraz University, Shiraz, Fars 71348-51154, Iran (e-mail: zibaeenejad@shirazu.ac.ir).}}


\begin{document}

\maketitle

\begin{abstract}
We study power control for an energy harvesting  communication system with independent and identically distributed Bernoulli energy arrivals. It is assumed that the transmitter is equipped with a finite-sized rechargeable battery and is able to look ahead to observe a fixed number of future arrivals. A complete characterization is provided for the optimal power control policy that achieves the maximum long-term average throughput over an additive white Gaussian noise channel.

\end{abstract}

\section{Introduction}
Supplying required energy of a communication system by energy harvesting (EH) from natural energy resources is not only beneficial from the environmental perspective, but also essential for long-lasting self-sustainable affordable telecommunication which can be employed in places with no electricity infrastructure. On the other hand, the EH systems require to handle related challenges, such as varying nature of green energy resources and limited battery storage capacity.\\
\indent Consider an EH communication system with a transmitter (TX) and a receiver (RX), connected by an additive white Gaussian noise (AWGN) channel.
The TX is equipped with a rechargeable battery of a given storage size, and is capable of harvesting the energy arrivals,  which are assumed to be independent and identically distributed (i.i.d.). The communication session consists of $T$ (discrete) time slots, and the instantaneous rate achieved at each time slot is a function of its allocated energy. A power control policy specifies the energy assignment across the time slots according to the initial battery energy level, the harvested energy sequence thus far, and the knowledge of future harvested energy arrivals. The reward associated with each policy is the average throughput over $T-$horizon. The average throughput optimization (ATO) problem aims to determine the optimal policy that achieves the maximum average throughput. \\
\indent In the seminal paper~\cite{ulukus}, the authors studied the ATO problem over a finite horizon for the offline model, where the energy arrivals are non-causally known at the TX, and they derived the optimal policy for this model based on ~\cite{modiano}. The analyses of the offline model for more general channels can be found in~\cite{yang2012, tutuncuoglu2012, yener, zhang12, review15} (and the references therein). In general, the optimal policies for the offline model strive to allocate the energy across the time horizon as uniformly as possible while trying to avoid energy loss due to the battery overflow.\\
\indent In landmark paper~\cite{ozgur16}, the authors studied the ATO problem over an infinite horizon for the online model, where the energy arrivals are causally known at the TX. They determined the optimal policy for Bernoulli energy arrivals and established the approximate optimality of the fixed fraction policy for general energy arrivals. Similar results were derived in ~\cite{Ulukus2017-ISIT, arafa2018} for a general concave and monotonically increasing utility function. In papers~\cite{zib-IWCIT, zib-IEMCON}, the authors studied the same problem with unlimited size battery and developed three simple online optimal policies which are optimal for the offline case as well.\\
\indent In this paper, we study the setup where the TX is able to look ahead to observe a window of size $w$ of future energy arrivals. In fact, the online model and offline model correspond to the extreme cases $w=0$ and $w=\infty$, respectively. Therefore, our formulation provides a link between these two models, which have been largely studied in isolation. From the practical perspective, the TX often has a good estimation of the amount of available energy in near future, because such energy is already harvested but not yet converted to a usable form. We investigate the ATO problem over an infinite horizon for this new setup. Specifically, we focus on Bernoulli energy arrivals and characterize the corresponding optimal policy. The main difference between this optimal policy and that of~\cite{ozgur16, Ulukus2017-ISIT, arafa2018} is as follows.
 For the new policy, if no energy arrival is seen in the look-ahead window, the battery always keeps some energy for future and spends a portion of available energy in the current time slot. In contrast, for the policy in \cite{ozgur16, Ulukus2017-ISIT, arafa2018}, the energy is only allocated to a fixed number of time slots after each battery charge and no energy is expended beyond that as the battery becomes depleted.\\
\indent The organization of this paper is as follows. In Section~\ref{Sec:Problem-Def}, we introduce the model and problem. In Section~\ref{Sec:Strategy}, we develop the structure of an optimal policy and justify the problem solving strategy. In Section~\ref{Sec:Properties}, we establish the main results and completely characterize the optimal policy. 
In Section~\ref{Sec:Conclusion}, we finally conclude this paper.

\section{Problem Definitions} \label{Sec:Problem-Def}
\indent The notations of this paper are as follows. $\mathds{N}$ and $\mathds{R}^+$ represent the set of natural numbers and set of positive real numbers, respectively. Random variables are denoted by capital letters and their realizations are written in lower case letters. The functions are denoted by calligraphic font. $\mathbb{E}$ is reserved for the expectation. The logarithms are in base~2.\\
\indent Consider a point-to-point AWGN quasi-static fading channel from a TX to an RX, where the channel gain is a constant $\gamma \in \mathds{R}^+$ for the entire communication session.
The communication is discrete time with time slot $\tau \in \mathds{N}$ as formulated by $Y_{\tau} = \sqrt{\gamma} X_{\tau} + Z_{\tau}$,
where $X_{\tau}$ and $Y_{\tau}$ are the transmitted signal and the received signal, respectively; $Z_{\tau}$ is the Gaussian noise with zero mean and unit variance. The TX is capable of harvesting energy from the environment and is equipped with a rechargeable battery with finite size $B>0$. The exogenous (harvested) energy arrivals are assumed to be i.i.d. process with known marginal distribution $P_{E}$. In this work, unless specified otherwise, we assume that $P_{E}$ is Bernoulli$(p)$-$\{0, B\}$, where $0 < p < 1$, defined as
\begin{equation} \label{eq:EH-Seq-distribtion}
  P_E(e) = \left\{
             \begin{array}{ll}
               p  &: e = B, \\
               1-p &: e = 0.
             \end{array}
           \right.
\end{equation}
Denote the energy level stored in the battery by random process $\{B_\tau, \tau \in \mathds{N}\}$  with initial level, $B_1=\beta$, where $0\leq \beta \leq B$. If the energy arrives at some time instant $\tau\in \mathds{N}$, $E_\tau = B$, and the battery is fully charged to $B_\tau = B$. In this case, any remained energy in the battery is overflowed and wasted away. If no energy arrives, $E_\tau = 0$, and the battery energy level is not escalated at time slot $\tau$.\\
\indent In this paper, we assume that the TX is able to look ahead with a fixed window size $w \in \mathds{N}$: the realization of the energy arrival sequence $\{E_t\}_{t=1}^{\tau+w}$ is known to the TX at time $\tau$.\\
\indent  The TX sends energy $A_\tau$ as an \emph{action} at time slot $\tau$, where the energy is determined by a (randomized) \emph{action function}
\begin{equation}\label{def:action-function-general}
\begin{array}{c}
  A_\tau = \mathcal{A}_\tau({(E_t)_{t=1}^{\tau+w}}, B_1)\:,\\
\text{subject to $A_{\tau} \leq B_{\tau}$}
\end{array}
\end{equation}
and gain throughput $\mathcal{R}_\tau(A_\tau) = \frac{1}{2}\log(1+ \gamma A_\tau)$, as the \emph{reward} of time $\tau$. 
Then, the battery energy level becomes
\begin{equation} \label{eq:battery-change-relation}
  B_\tau = \min\{B_{\tau-1} - A_{\tau-1} + E_\tau , B\}\:.
\end{equation}
\indent A look-ahead policy $\pi(w)$ is characterized by sequence of action functions $\{\mathcal{A}_\tau\}_{\tau=1}^\infty$. For a fixed communication session time $T \in \mathds{N}$, $\pi(w)$ gains the average (expected) throughput over $T$-horizon
\begin{equation} \label{eq:def-average-throughput}
\Gamma^{\pi(w)}_T \triangleq \frac{1}{T} \mathbb{E}\left(\sum\limits_{\tau=1}^T \frac{1}{2}\log(1+ \gamma \mathcal{A}_\tau((E_t)_{t=1}^{\tau+w}, B_1))\right)\:,
\end{equation}
as its associated reward, where the expectation is over all energy arrival sequences $\{e_t\}_{t=1}^{T}$.
\begin{definition} \label{def:long-term-throughput}
The largest average reward (channel throughput) over infinite horizon (long term) is defined as
\begin{equation} \label{eq:long-term-throughput}
  \Gamma^*_{B_1} \triangleq \sup_{\pi(w)} \liminf_{T\rightarrow \infty} \Gamma^{\pi(w)}_T \:.
\end{equation}
If $\Gamma^*_{B_1}$ is attainable by a policy $\pi^*(w)$, it is called optimal.
\end{definition}
\begin{remark}
It can be shown~\cite[Appendix~B]{ozgur16} that $\Gamma^*_{B_1}$ does not depend on $B_1$. Hence, we can drop the subscript $B_1$ in~\eqref{eq:long-term-throughput}, and assume $B_1=B$ without loss of generality (WLOG).
\end{remark}
\indent In this paper, we seek $\Gamma^*$ and the corresponding optimal policy $\pi^*(w)$ according to Definition~\ref{def:long-term-throughput}.

\section{Problem Solving Strategy}\label{Sec:Strategy}
\indent First, assume that the distribution of the harvested energy, $P_E$, is arbitrary. In general, $A_\tau$ depends on $(B_1, \{E_t\}_{t=1}^{\tau+w})$. According to~\cite{Puterman, survey}, there is no loss of optimality in~\eqref{eq:long-term-throughput} if the supremum is taken over deterministic Markovian stationary policies which only rely on system state
\begin{equation} \label{eq:state}
  S_\tau = (B_\tau, E_{\tau+1}, E_{\tau+2}, \ldots, E_{\tau+w})\:.
\end{equation}
Indeed, \eqref{eq:long-term-throughput} is attainable by an optimal stationary policy $\pi^*(w)$. Given $S_\tau$ with finite length $w+1$, knowing energy arrivals $\{E_t\}_{t=1}^{\tau}$ does not enhance $\Gamma^*$. Note that the action $A_\tau$ is not only determined by the current energy level $B_\tau$, but also it can be affected by the observed future energy arrivals within the look ahead window. As the optimal policy is Markovian and stationary, the action function~\eqref{def:action-function-general} can be simplified to the time-invariant function
\begin{equation}\label{def:action-function-stationary}
\begin{array}{c}
  A_\tau = \mathcal{A}(S_\tau)\:.\\
\text{subject to $A_{\tau} \leq B_{\tau}$}
\end{array}
\end{equation}
\indent Now, focus on Bernoulli distribution as defined in~\eqref{eq:EH-Seq-distribtion}. In this case, the state~\eqref{eq:state} can be simplified as follows. Let random variable $D_\tau$ be the time \emph{distance} of the \emph{earliest} energy arrival located inside the look-ahead window. Specifically, define
\begin{equation*}
  D_\tau \triangleq \left\{
             \begin{array}{ll}
               0 \quad\text{: if }E_{\tau+t} = 0 \text{ for all}\:  t\in\{1, \ldots, w\},\\
               \min\{t: 1\leq t\leq w, E_{t+\tau}=B\} \quad\text{: O.W.}\:.
             \end{array}
           \right.
\end{equation*}
For any given energy arrival sequence $\{e_t\}_{t=1}^{\tau+w}$ and battery level $b_\tau$ at time $\tau \in \mathds{N}$, if an energy arrival is observed at time $\tau$, i.e., $d_\tau>0$, then the optimal policy uniformly assigns instant battery energy $b_\tau$ to the following $d_\tau$ time spots. This is due to the concavity of the reward function~\eqref{eq:def-average-throughput}. Otherwise, $\mathcal{A}(b_\tau, 0, \cdots, 0)$ is the action at time $\tau$, which will be determined in the sequel. Therefore, the action function~\eqref{def:action-function-stationary} of the stationary optimal policy $\pi^*(w)$ is given by
\begin{equation} \label{eq:optimal-policy-Bern}
a_\tau = \mathcal{A}(b_\tau, e_{\tau+1}, \ldots, e_{\tau+w}) = \left\{
             \begin{array}{ll}
               \frac{b_\tau}{d_\tau} &\text{: }d_\tau \ne 0\:,\\
               \mathcal{A}(b_\tau, 0, \cdots, 0) &\text{: O.W.}\:.
             \end{array}
           \right.
\end{equation}
From~\eqref{eq:optimal-policy-Bern}, we conclude that $a_\tau$ (and so the associated reward) can be uniquely determined by $(b_\tau, d_\tau)$. Hence, the system state for Bernoulli energy arrival can be reduced to
\begin{equation} \label{eq:state-Bernoulli}
S_\tau = (B_\tau, D_\tau)\:.
\end{equation}
\indent A non-negative sequence $\{x_j\}_{j=1}^N$ with length $N\in \mathds{N}$ is called \emph{admissible}, if $\sum_{j=1}^N x_j \leq B$.
Let $b_1=B$. Define admissible sequence $\{\xi^*_j\}_{j=1}^\infty$ associated with $\pi^*(w)$ by
\begin{equation}\label{def:xi-Sequence}
  \xi^*_j \triangleq \mathcal{A}(b_j, 0, \cdots, 0)\: : j\in \mathds{N},
\end{equation}
where $b_{j+1} = b_j - \xi^*_j$.
Due to~\eqref{eq:optimal-policy-Bern}, if the battery is charged up at some time $\tau=\tau_1$, ($b_{\tau_1} = B$), but no arrival occurs later, ($e_\tau = 0$\: : $\tau>\tau_1$), $\pi^*(w)$ sends $a_{\tau} = \xi^*_{\tau-\tau_1+1}$ for $\tau \geq \tau_1$. In the sequel, $\{\xi_i^*\}_{i=1}^\infty$ and its properties are investigated. Once $\{\xi_i^*\}_{i=1}^\infty$ is determined, $\pi^*(w)$ follows Algorithm~\ref{Fig:Algorithm}.\\
\begin{algorithm}[t]
  \caption{Optimal policy $\pi^*(w)$ for Bernoulli Arrivals}\label{Fig:Algorithm}
  \begin{algorithmic}
 \Require{Window size $w$, battery capacity $B$, arrival energy sequence $\{E_t\}_{t=1}^{\tau + w}$ at any time $\tau$, and sequence $\{\xi_j^*\}_{j=1}^\infty$.}
 \Ensure{The optimal assigned energy (action) $a^*_\tau$ at time $\tau$}.
  \textit{Initialize}: \State{Set time $\tau \leftarrow 1$, next observation distance $d \leftarrow 0$, battery level $b \leftarrow B$, counter $i\leftarrow 1$, and counter  $j\leftarrow 1$}.
  \Repeat   \Comment{No observation $(d=0)$ by default}
  \While {($d = 0$ \AND $i \leq w$)}  
  \If {($e_{i+\tau} = B$)} \Comment{Finds earliest arrival distance}
  \State $d \leftarrow i$; $j\leftarrow 1$; $i\leftarrow 1$
  \EndIf
  \State $i \leftarrow i+1$
  \EndWhile
  \If {($d \ne 0$)} \Comment{If an arrival is observed}
  \State $a_\tau \leftarrow \frac{b}{d}$; $d \leftarrow d-1$
  \Else \Comment{If no arrival is observed}
  \State $a_\tau \leftarrow \xi^*_j$; $j \leftarrow j+1$; $i \leftarrow w$
  \EndIf
  \State $b \leftarrow \min\{b - a_\tau + e_{\tau+1}, B\}$ \Comment{Battery level is updated}
  \State $\tau \leftarrow \tau + 1$
  \Until
 \end{algorithmic}
 \end{algorithm}
\indent The times in interval $[\tau_1, \tau_2)$ with property $E_{\tau_1}=E_{\tau_2}=B$ and $E_{t}=0$ for $\tau_1 < t < \tau_2$ is called a \emph{cycle} with \emph{start time} $\tau_1$. Let define random variable $L = \tau_2-\tau_1$ as the (duration) \emph{time of the cycle}. Due to~\eqref{eq:EH-Seq-distribtion}, the distribution of $L$ is Geometric, i.e., $Pr\{L=k\} = p(1-p)^{k-1}$ for $k\in \mathds{N}$, with mean
\begin{equation}
\mathbb{E}(L) = \frac{1}{p}. \label{eq:E(L)}
\end{equation}
\indent If a stationary policy is employed, the processes of $\{B_\tau, \tau\in \mathds{N}\}$, $\{S_\tau, \tau\in \mathds{N}\}$, and $\{A_\tau, \tau\in \mathds{N}\}$ are non-delayed regenerative processes~\cite[Section~7.5]{Ross_prob} with cycles of time $L$: when the battery charges up to $B$ at some cycle start time $\tau_1$, memories of the processes are reset. Consequently, $\{(B_t, S_t, A_t)\}_{t=\tau_1}^\infty$ does not statistically depend on $(\{B_t\}_{t=1}^{\tau_1-1},\{S_t\}_{t=1}^{\tau_1-1}, \{A_t\}_{t=1}^{\tau_1-1})$ and time $\tau_1$.\\
\indent As $E(L)$ in~\eqref{eq:E(L)} and $|\mathcal{R}_\tau(A_\tau)|$\ are bounded and $\{A_\tau, \tau\in \mathds{N}\}$ is a non-delayed regenerative process, the ``renewal reward theorem''~\cite[Section~7.4]{Ross_prob} can be utilized to simplify the long-term average throughput achieved by a policy $\pi(w)$.
\begin{eqnarray}
  \liminf_{T\rightarrow \infty} \Gamma^{\pi(w)}_T &=& \frac{\mathbb{E}\left(\sum_{t=1}^{L} \frac{1}{2}\log(1+\gamma A_t) \right)}{\mathbb{E}(L)}\nonumber\\
    &=& \frac{p}{2} (\sum_{k=1}^\infty \sum_{j=1}^k \log(1+ \gamma Q_{j}) Pr\{L=k\}), \label{eq:renewal-2-Theta-infty}
\end{eqnarray}
where~\eqref{eq:renewal-2-Theta-infty} is due to~\eqref{eq:E(L)} and $Q_{j}$ is the energy assigned to time $j^{th}$ of a cycle conditioned on $k$.
The following definition is helpful to calculate \eqref{eq:renewal-2-Theta-infty}.
\begin{definition} \label{def:Theta_inf}
Let $\{x_j\}_{j=1}^\infty$ be an admissible sequence. Define
\begin{align}
\mathcal{T}_\infty(\{x_j\}_{i=1}^\infty) \triangleq& \sum\limits_{k=1}^w p^2(1-p)^{k-1}\frac{k}{2}\log(1+\gamma\frac{B}{k}) \nonumber \\
&+\sum\limits_{j=1}^\infty p(1-p)^{j+w-1}\frac{1}{2}\log(1+\gamma x_j) \nonumber\\
+\sum\limits_{k=1}^\infty &p^2(1-p)^{k+w-1}\frac{w}{2}\log(1+\gamma\frac{B-\sum_{j=1}^{k} x_j}{w})\:.
\label{optimize-target-Infinity-modify}
\end{align}
\end{definition}

\begin{lem} \label{Lem:optimal-energy-general-equation}
The long-term average throughput~\eqref{eq:def-average-throughput} of optimal policy~$\pi^*(w)$ with associated sequence $\{\xi_i^*\}_{i=1}^\infty$ satisfies
\begin{equation}\label{eq:optimal-energy-general-equation}
\Gamma^* = \mathcal{T}_\infty(\{\xi_i^*\}_{i=1}^\infty)\:.
\end{equation}
\end{lem}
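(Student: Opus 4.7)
The plan is to start from the renewal-reward identity~\eqref{eq:renewal-2-Theta-infty}, which already reduces $\Gamma^*$ to $p\,\mathbb{E}\!\left(\sum_{t=1}^{L} \tfrac{1}{2}\log(1+\gamma A_t)\right)$, condition on the cycle length $L=k$ (geometric with $\Pr\{L=k\}=p(1-p)^{k-1}$), and then use the explicit action structure~\eqref{eq:optimal-policy-Bern} together with the sequence $\{\xi_j^*\}$ from~\eqref{def:xi-Sequence} to compute the inner sum of rewards within a generic cycle. After summing over $k$ and reindexing, the result will line up term-by-term with Definition~\ref{def:Theta_inf}.

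More concretely, I would fix a cycle starting at time $\tau_1$ (so $b_{\tau_1}=B$ and the next arrival occurs at $\tau_1+k$) and split into two regimes. When $k\leq w$, the next arrival lies inside the look-ahead window from the very first slot of the cycle; at slot $\tau_1+j-1$ one has $d_\tau=k-j+1$, so by~\eqref{eq:optimal-policy-Bern} the action is $b_\tau/d_\tau$. A one-line induction shows that each such action equals $B/k$: if battery is $b$ and distance is $d$, then the action $b/d$ leaves $b(d-1)/d$ in the battery while the distance becomes $d-1$, so the next action is again $b/d$. The cycle reward is therefore $\tfrac{k}{2}\log(1+\gamma B/k)$. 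When $k>w$, no arrival is visible during the first $k-w$ slots, so by~\eqref{eq:optimal-policy-Bern} and~\eqref{def:xi-Sequence} the $j$-th action is $\xi_j^*$ for $1\leq j\leq k-w$; at slot $\tau_1+k-w$ the arrival enters the window at distance $w$, and the same induction shows that each of the final $w$ actions equals $(B-\sum_{j=1}^{k-w}\xi_j^*)/w$. The cycle reward becomes $\sum_{j=1}^{k-w}\tfrac{1}{2}\log(1+\gamma\xi_j^*)+\tfrac{w}{2}\log\!\big(1+\gamma(B-\sum_{j=1}^{k-w}\xi_j^*)/w\big)$.

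Plugging these two expressions back into~\eqref{eq:renewal-2-Theta-infty}, the $k\leq w$ contribution produces the first sum in~\eqref{optimize-target-Infinity-modify} directly (with a factor $p\cdot p(1-p)^{k-1}=p^2(1-p)^{k-1}$). For the $k>w$ contribution, after the change of index $k'=k-w$, the trailing-$w$ term matches the third sum in~\eqref{optimize-target-Infinity-modify} verbatim. The $\xi_j^*$ term becomes $p\sum_{k'=1}^\infty p(1-p)^{k'+w-1}\sum_{j=1}^{k'}\tfrac{1}{2}\log(1+\gamma\xi_j^*)$, and swapping the order of summation (legitimate because the total reward per cycle is uniformly bounded by $\tfrac{L}{2}\log(1+\gamma B)$, whose expectation is finite) produces the geometric tail $p^2\sum_{k'\geq j}(1-p)^{k'+w-1}=p(1-p)^{j+w-1}$, recovering the second sum in~\eqref{optimize-target-Infinity-modify}.

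The only non-mechanical step is the induction that the last $w$ actions of a cycle in regime $k>w$ are all equal (and hence that all the battery energy $B-\sum_{j=1}^{k-w}\xi_j^*$ is actually consumed before the next recharge), but this is a two-line calculation from~\eqref{eq:optimal-policy-Bern}. Everything else is bookkeeping: conditioning on $L$, substituting the policy, and interchanging a doubly-indexed sum that is absolutely convergent. Thus the identity $\Gamma^*=\mathcal{T}_\infty(\{\xi_i^*\}_{i=1}^\infty)$ follows.
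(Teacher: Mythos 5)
Your proposal is correct and follows essentially the same route as the paper: apply the renewal--reward identity, condition on the cycle length $L=k$, split into the cases $k\leq w$ (uniform allocation $B/k$) and $k>w$ (allocate $\xi_j^*$ for the first $k-w$ slots, then spread the remainder uniformly over the last $w$), substitute, reindex, and swap the order of summation to obtain the geometric tails in Definition~\ref{def:Theta_inf}. The extra details you supply (the two-line induction that the in-window actions stay constant, and the absolute-convergence justification for interchanging sums) are correct refinements of steps the paper treats as immediate.
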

\begin{proof}
The proof follows from Definition~\eqref{def:long-term-throughput},~\eqref{eq:renewal-2-Theta-infty} and Definition~\ref{def:Theta_inf}. First, energy assignments $Q_j$ in~\eqref{eq:renewal-2-Theta-infty} for $\pi^*(w)$ is determined in the following. Assume a new cycle is started at time $\tau = \tau_1$ and so the battery energy level is $B$. Given the cycle time $L=k$,  the following two cases can be considered.
\begin{enumerate}[a.]
  \item Case $k \leq w$: In this case, $D_\tau = k$ as the arrival is observed in the window. Hence, according to~\eqref{eq:optimal-policy-Bern}, we have
      \begin{equation} \label{eq:Lem:optimal-energy-general-equation:1}
        Q_{j} = \frac{B}{k}\quad \text{for }j\in\{1,\ldots,k\}.
      \end{equation}
  \item Case $k > w$: In this case, the TX allocates $\{\xi_j^*\}_{j=1}^{k-w}$ to the first $k-w$ time slots of each cycle; then, it uniformly distributes the remained energy of the battery $(B- \sum_{i=1}^{k-w} Q_i)$ into time slots $k-w+1, \ldots, k$ due to~\eqref{eq:optimal-policy-Bern}, as soon as the arrival is observable (look-ahead window covers time $k$).
      \begin{equation} \label{eq:Lem:optimal-energy-general-equation:2}
      Q_j = \left\{
      \begin{array}{ll}
        \xi_j^* &:j\in \{1, \ldots, k-w\},\\
        \frac{B-\sum_{j=1}^{k-w}\xi_j^*}{w} &:j\in \{k-w+1, \ldots, w\}\:.
      \end{array}
      \right.
      \end{equation}
      Finally, the proof can be concluded from the following calculation of \eqref{eq:renewal-2-Theta-infty} for $\pi^*(w)$ based on
      \eqref{eq:Lem:optimal-energy-general-equation:1} and \eqref{eq:Lem:optimal-energy-general-equation:2}.
   \begin{eqnarray}
   \displaystyle
   \Gamma^* &=& p\sum\limits_{k=1}^w p(1-p)^{k-1}\frac{k}{2}\log(1+\gamma\frac{B}{k})\nonumber\\
   &&+p\sum\limits_{k=w+1}^\infty
    p(1-p)^{k-1}[\sum\limits_{j=1}^{k-w}\frac{1}{2}\log(1+\gamma \xi_j^*) \nonumber\\
    &&\qquad \qquad+\frac{w}{2}\log(1+\gamma\frac{B-\sum_{j=1}^{k-w}\xi_j^*}{w})]\nonumber\\
    &=& \sum\limits_{k=1}^w p^2(1-p)^{k-1}\frac{k}{2}\log(1+\gamma\frac{B}{k})\nonumber\\
    &&+\sum\limits_{j=1}^\infty \frac{1}{2}\log(1+\gamma \xi_j^*)( \sum\limits_{k=j+w}^\infty
    p^2(1-p)^{k-1})\nonumber\\
    &&+\sum\limits_{k=w+1}^\infty
    p^2(1-p)^{k-1}\frac{w}{2}\log(1+\gamma\frac{B-\sum_{j=1}^{k-w}\xi_j^*}{w})\:. \nonumber\\
    &=& \sum\limits_{k=1}^w p^2(1-p)^{k-1}\frac{k}{2}\log(1+\gamma\frac{B}{k})\nonumber\\
    &&+\sum\limits_{j=1}^\infty p(1-p)^{j+w-1}\frac{1}{2}\log(1+\gamma \xi_j^*)\nonumber\\
    &&+\sum\limits_{k=1}^\infty
    p^2(1-p)^{k+w-1}\frac{w}{2}\log(1+\gamma\frac{B-\sum_{j=1}^{k}\xi_j^*}{w})\:. \nonumber\\
    \label{optimize-target-Infinity}
    \end{eqnarray}

\end{enumerate}
\end{proof}
\section{Properties of the Optimal Policy} \label{Sec:Properties}
In this section, we characterize the energy sequence $\{\xi_j^*\}_{j=1}^\infty$ and its properties. The main result of this paper is as follows.
\begin{theorem} \label{Theorem:Main-Result}
Define $\mathcal{T}_\infty^* = \sup\mathcal{T}_{\infty}(x_1, x_2, \cdots)$, where the supremum is over all admissible sequences $\{x_j\}_{j=1}^\infty$.
Then, the maximum long-term average reward (channel throughput) of the look-ahead model is given by
\begin{equation} \label{eq:Theorem:Average-Throughput}
  \Gamma^* = \mathcal{T}_\infty^*\:.
\end{equation}
Moreover, $\{\xi_j^*\}_{j=1}^\infty$ induced by the optimal policy is the unique maximizer of $\mathcal{T}_\infty$ and is the unique sequence satisfying
\begin{subequations}
\begin{eqnarray}
    \frac{1-p}{1+\gamma \xi_{j+1}^*} &=& \frac{1}{1+\gamma \xi_{j}^*} - \frac{p}{1+\frac{\gamma}{w}(B-\sum_{i=1}^{j} \xi_i^*)}, \label{eq:E_j^*_recursive}\\
    \sum\limits_{j=1}^{\infty} \xi_j^* &=& B\:. \label{eq:energy_constraint-infinit}
\end{eqnarray}
\end{subequations}
It is also a strictly decreasing positive sequence with property
  \begin{equation}\label{eq:theorem:property}
     \xi_j^* < \frac{B- \sum_{i=1}^{j} \xi_i^*}{w} \:.
  \end{equation}
\end{theorem}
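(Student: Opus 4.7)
The plan is to cast the problem as the concave maximization of $\mathcal{T}_\infty$ over admissible sequences, read off the recurrence from first-order (KKT) conditions, and derive the structural properties from the recurrence.

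\emph{Step 1 ($\Gamma^* = \mathcal{T}_\infty^*$).} For any admissible $\{x_j\}_{j=1}^\infty$, substituting it for $\{\xi_j^*\}$ in Algorithm~\ref{Fig:Algorithm} defines a valid stationary look-ahead policy whose long-term throughput, by the renewal-reward computation underlying Lemma~\ref{Lem:optimal-energy-general-equation}, equals $\mathcal{T}_\infty(\{x_j\})$; hence $\Gamma^* \ge \mathcal{T}_\infty^*$. Combined with $\Gamma^* = \mathcal{T}_\infty(\{\xi_j^*\}) \le \mathcal{T}_\infty^*$ from Lemma~\ref{Lem:optimal-energy-general-equation}, this yields~\eqref{eq:Theorem:Average-Throughput} and shows $\{\xi_j^*\}$ is a maximizer of $\mathcal{T}_\infty$. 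Strict concavity of $\mathcal{T}_\infty$ on the convex admissible set (each $x_j$ enters a $\log$ term with a strictly positive coefficient) then upgrades this to uniqueness.

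\emph{Step 2 (the recurrence).} I would next extract~\eqref{eq:E_j^*_recursive} from KKT. Two preparatory facts are needed: $\sum_j \xi_j^* = B$ (the budget is active), obtained by showing $\partial_{x_j}\mathcal{T}_\infty > 0$ whenever the budget is slack; and $\xi_j^* > 0$ for every $j$, obtained by examining $\partial_{x_j}\mathcal{T}_\infty$ at $x_j = 0$ (possibly after a finite-horizon truncation and passage to the limit). With an interior active-budget optimum, Lagrange equalization gives $\partial_{x_j}\mathcal{T}_\infty = \partial_{x_{j+1}}\mathcal{T}_\infty$ for all $j$; subtracting the two derivatives computed termwise from~\eqref{optimize-target-Infinity-modify} collapses the tail sums and, after dividing through by $p(1-p)^{j+w-1}\gamma/2$, reduces exactly to~\eqref{eq:E_j^*_recursive}. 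Uniqueness of any sequence satisfying both~\eqref{eq:E_j^*_recursive} and~\eqref{eq:energy_constraint-infinit} then follows from sufficiency of KKT under strict concavity.

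\emph{Step 3 (properties).} Rewriting~\eqref{eq:E_j^*_recursive} as
\begin{equation*}
\frac{1}{1+\gamma \xi_j^*} = (1-p)\,\frac{1}{1+\gamma \xi_{j+1}^*} + p\,\frac{1}{1+\frac{\gamma}{w}(B-\sum_{i=1}^{j}\xi_i^*)},
\end{equation*}
displays the left-hand side as a strict convex combination, which immediately makes the bound~\eqref{eq:theorem:property} equivalent to the strict monotonicity $\xi_{j+1}^* < \xi_j^*$. To fix the direction, I would use $\sum_j \xi_j^* = B$ to get $\xi_j^* \to 0$ and $B-\sum_{i=1}^{j}\xi_i^* \to 0$, and rule out the reverse ordering at every index via an induction on $j$ that compares the two vanishing rates through the recurrence. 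Strict positivity is transferred in the same way. The main obstacle will be the bookkeeping with the infinite series in~\eqref{optimize-target-Infinity-modify}: justifying termwise differentiation, adapting finite-dimensional KKT to this $\ell^1$-type feasible set (most naturally through truncation), and ensuring every coordinate of the maximizer is strictly positive so that the stationarity equations can be equated term-by-term rather than held as KKT inequalities.
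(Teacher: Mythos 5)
Your overall architecture is the right one and matches the paper's in spirit: Step 1 is essentially the paper's argument (any admissible sequence is realizable by a stationary policy, so $\Gamma^*\ge\mathcal{T}_\infty^*$, while Lemma~\ref{Lem:optimal-energy-general-equation} gives the reverse, and strict concavity gives uniqueness of the maximizer), and the convex-combination rewriting of \eqref{eq:E_j^*_recursive} in your Step 3 is a genuinely nice observation — it makes \eqref{eq:theorem:property} equivalent to strict monotonicity and, if pushed through, would actually repair the one soft spot in the paper's own limit argument (strict inequalities in Lemma~\ref{lem:Inequality} only survive as non-strict ones after $N\to\infty$). However, the proof as proposed has genuine gaps precisely at the points you flag as "obstacles," and those points are where the substance of the theorem lives. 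The paper does not do KKT in infinite dimensions at all: it solves the $N$-dimensional truncations \eqref{eq:N-dimensional-optimization}, establishes positivity, strict decrease, and the bound \eqref{eq:lem:Inequality-1} at the finite level (Lemmas~\ref{lem:largest_nonzero}--\ref{Lemma:KKT:recursive}), proves that $\xi_j^{(N)*}$ is monotone in $N$ (Lemma~\ref{lem:decrease:N}) so that coordinatewise limits exist, shows the residual battery $B^{(N)}=w\xi_N^{(N)*}\le wB/N\to 0$ to get \eqref{eq:energy_constraint-infinit}, and then identifies the limit sequence with the maximizer of $\mathcal{T}_\infty$ via the explicit $\epsilon_N$ squeeze of Lemma~\ref{Lem:Limits-of-target-function-LB}. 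None of this is optional bookkeeping: without the monotonicity in $N$ you do not even know the truncated maximizers converge, and without identifying the limit with $\{\xi_j^*\}$ you cannot transfer any property (positivity, in particular, is \emph{not} preserved under pointwise limits, so "truncate and pass to the limit" does not by itself give $\xi_j^*>0$).

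Two further specific gaps. First, your argument that the budget is active — "$\partial_{x_j}\mathcal{T}_\infty>0$ whenever the budget is slack" — is false as stated: the derivative of $\mathcal{T}_\infty$ in coordinate $x_j$ is $p(1-p)^{j+w-1}\tfrac{\gamma/2}{1+\gamma x_j}$ minus a weighted sum of battery terms and can be negative for large $x_j$ even with slack budget; the correct version is that slack budget forces the derivative to be positive for all sufficiently large $j$ (since $x_j\to 0$ while the residuals stay bounded away from zero), which still yields a contradiction but needs to be said. Second, the claimed uniqueness of the sequence satisfying \eqref{eq:E_j^*_recursive}--\eqref{eq:energy_constraint-infinit} does not follow from "sufficiency of KKT under strict concavity" alone: the recursion is only the \emph{difference} of consecutive stationarity conditions, so a solution of the recursion-plus-budget system is determined by the free parameter $\xi_1^*$, and one must show that exactly one choice of $\xi_1^*$ yields an admissible (nonnegative) sequence summing to $B$ — equivalently, that every such solution satisfies the full KKT system with $\lambda=\mu_j=0$. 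The paper obtains this implicitly through the identification with the unique maximizer; your proposal asserts it without an argument.
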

\begin{remark}
The optimal average throughput of the studied model coincides with the optimal average throughput of the non-causal model if $w=\infty$ is set in~\eqref{optimize-target-Infinity}. In this case the optimal average throughput is given by
\begin{equation*}
\Gamma^* = \sum_{k=1}^{\infty} p^2(1-p)^{k-1}\frac{k}{2}\log(1+\gamma \frac{B}{k})\:.
\end{equation*}
\end{remark}
\indent In Fig.~\ref{fig:Throughput-window}, we illustrate the (long-term) average throughput $\Gamma^*$ as a function of window size $w$ using~\eqref{eq:Theorem:Average-Throughput} for the given system parameters. Although the optimal throughput rate is an increasing function of $w$, if the window size $w\geq 5$ in Fig.~\ref{fig:Throughput-window}, this communication system achieves the optimal average throughput corresponds to $w=\infty$ within a gap smaller than $0.5\%$. The rest of this section is devoted to the proof of theorem~\ref{Theorem:Main-Result}.\\
\indent The proof of~\eqref{eq:Theorem:Average-Throughput} is as follows. As mentioned, there exists a stationary policy $\pi^*(w)$ which attains $\Gamma^*$~\cite{Puterman, survey}. That policy also satisfies~\eqref{eq:optimal-energy-general-equation} in which $\{\xi_j^*\}_{j=1}^\infty$ is associated with $\pi^*(w)$. As $\pi^*(w)$ is optimal, $\{\xi_j^*\}_{j=1}^\infty$ must maximize $\mathcal{T}_{\infty}$ in \eqref{eq:optimal-energy-general-equation} over all admissible sequences. As $\mathcal{T}_{\infty}$ is a \emph{strictly} concave function, the maximizing sequence is unique. The unique maximizer $\{\xi_j^*\}_{j=1}^\infty$ indeed achieves $\mathcal{T}_{\infty}^*$. Hence, \eqref{eq:Theorem:Average-Throughput} is justified. To investigate sequence $\{\xi_j^*\}_{j=1}^\infty$ as the unique maximizer of $\mathcal{T}_{\infty}$, we need to define the corresponding $N-$dimensional optimization problem in the following subsection to employ Karush-Kuhn-Tucker (KKT) conditions. Then, we investigate the relation between the finite dimensional optimization problem and the infinite dimensional optimization problem.\\
\begin{figure}[ht!]
  \centering
  \includegraphics[width = 8.5cm]{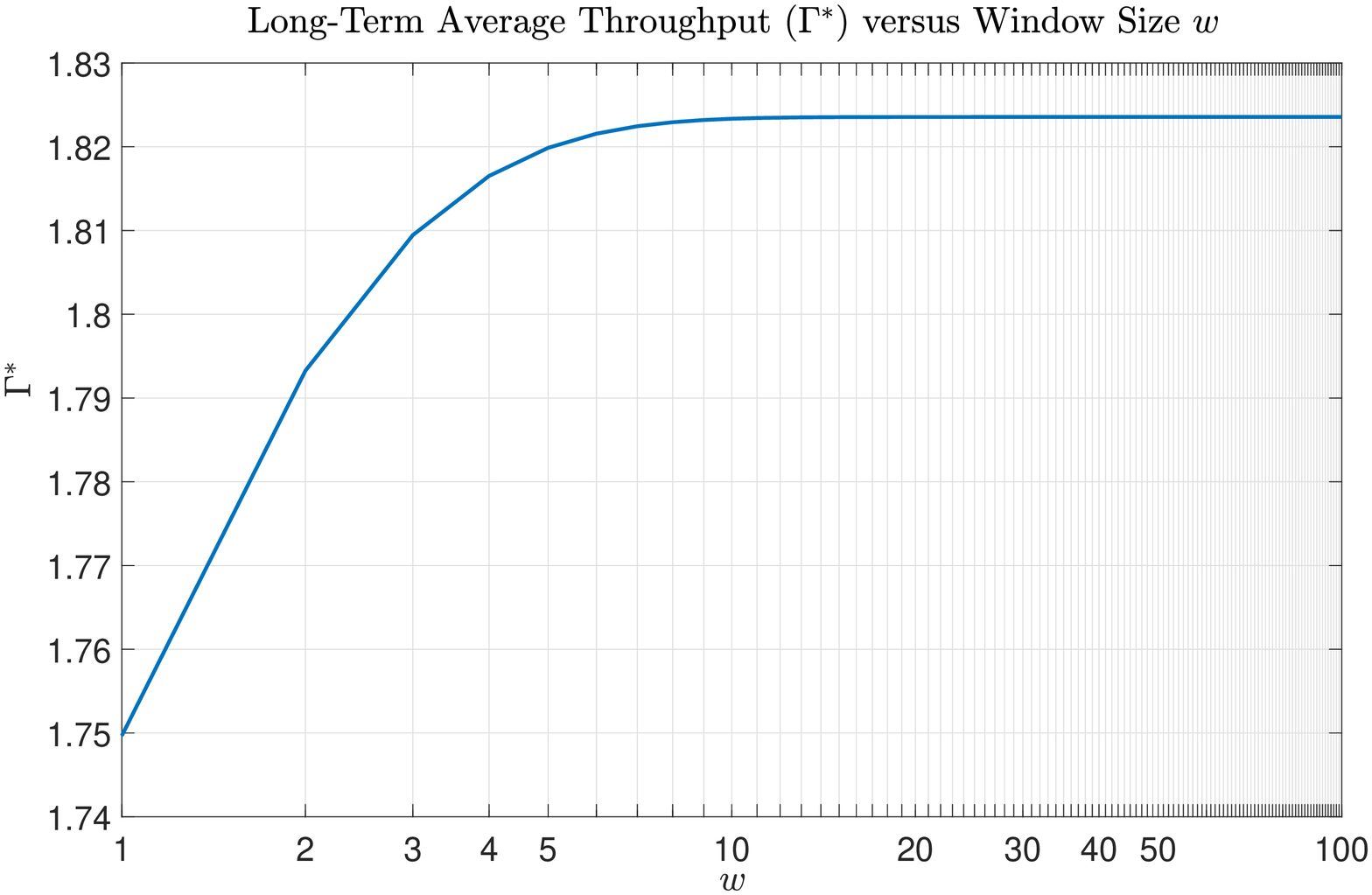}
  \caption{Average throughput $\Gamma^*$ versus look-ahead window size $w$ for fixed channel parameters $p = 0.3$ and $\gamma= 0.5$ and battery size $B = 100$.}\label{fig:Throughput-window}
\end{figure}
\subsection{$N-$Dimensional Optimization Problem.}
\indent  Let define the following $N-$dimensional optimization problem based on Definition~\ref{def:Theta_inf}.
\begin{definition}
  Fix $N\in \mathds{N}$. Recall Definition~\ref{def:Theta_inf}, and set
  $x_j = \xi_j$ for $j\in \{1,\ldots, N\}$ and $x_j = 0$ for $j > N$. Define
  \begin{align*}
    &\mathcal{T}_N(\xi_1, \cdots, \xi_N) \triangleq \mathcal{T}_\infty(\xi_1, \cdots, \xi_N, 0, 0, \cdots) \nonumber \\
    &\quad= \sum\limits_{k=1}^wp^2(1-p)^{k-1}\frac{k}{2}\log(1+\gamma\frac{B}{k}) \nonumber\\
    &\qquad+\sum\limits_{j=1}^N p(1-p)^{j+w-1}\frac{1}{2}\log(1+\gamma \xi_j) \nonumber\\
&\qquad+\sum\limits_{k=1}^{N}
p^2(1-p)^{k+w-1}\frac{w}{2}\log(1+\gamma\frac{B-\sum_{j=1}^{k}\xi_j}{w}) \nonumber\\
&\qquad+ p(1-p)^{w+N}\frac{w}{2}\log(1+\gamma\frac{B-\sum_{j=1}^{N}\xi_j}{w})\:. \label{optimize-target-finite}
  \end{align*}
The $N-$\emph{dimensional optimization problem} is defined as
\begin{equation}\label{eq:N-dimensional-optimization}
    \mathcal{T}_N^* = \sup \mathcal{T}_N(\xi_1^{(N)}, \cdots, \xi_N^{(N)})\:,
\end{equation}
where the supremum is over all sequences $\{\xi_j^{(N)}\}_{j=1}^N$ subject to
\begin{subequations}
\begin{eqnarray}
\xi_j^{(N)} \geq 0\:,\\
\sum_{j=1}^{N} \xi_j^{(N)} \leq B\:. \label{eq:N-dimensional-optimization:energy-constraint}
\end{eqnarray}
\end{subequations}
The maximizer of~$\mathcal{T}_N$, if exists, is denoted by $\{\xi_j^{(N)*}\}_{j=1}^N$.
\end{definition}
\begin{lem} \label{Lem:Limits-of-target-function-LB}
The following statements are valid for $\mathcal{T}_N^*$.
\begin{enumerate}[(a).]
  \item There exists a unique maximizer $\{\xi_j^{(N)*}\}_{j=1}^N$ of $\mathcal{T}_N$.
  \item $\mathcal{T}_N^*$ is an increasing function of $N$.
  \item $\lim\limits_{N \rightarrow \infty} \mathcal{T}_N^* = \mathcal{T}_\infty^*$ and it is finite.
\end{enumerate}
\end{lem}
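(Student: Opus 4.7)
My plan is to handle parts (a), (b), (c) in order using a compactness/strict-concavity argument for (a), a zero-padding construction for (b), and monotone convergence combined with a geometric tail estimate for (c).

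Starting with (a), the feasible set $\{(\xi_1,\ldots,\xi_N):\xi_j\geq 0,\ \sum_{j=1}^N\xi_j\leq B\}$ is a compact convex polytope and $\mathcal{T}_N$ is continuous on it, so the Weierstrass theorem yields existence of a maximizer. Uniqueness will follow from strict concavity: the middle sum $\sum_{j=1}^N p(1-p)^{j+w-1}\tfrac{1}{2}\log(1+\gamma\xi_j)$ is a sum of strictly concave univariate functions in \emph{disjoint} coordinates and is therefore strictly concave jointly in $(\xi_1,\ldots,\xi_N)$, while the remaining terms -- logarithms of affine functionals of $\xi$ -- are merely concave. Since the sum of a strictly concave and a concave function is strictly concave, $\mathcal{T}_N$ is strictly concave on the feasible set and the maximizer is unique.

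For (b), I would exploit the definitional identity $\mathcal{T}_{N+1}(\xi_1,\ldots,\xi_N,0)=\mathcal{T}_\infty(\xi_1,\ldots,\xi_N,0,0,\ldots)=\mathcal{T}_N(\xi_1,\ldots,\xi_N)$. Evaluating at the zero-padded maximizer $(\xi_1^{(N)*},\ldots,\xi_N^{(N)*},0)$, which is feasible for the $(N{+}1)$-dimensional problem, yields $\mathcal{T}_{N+1}^*\geq\mathcal{T}_N^*$. For (c), the same zero-padding shows $\mathcal{T}_N^*\leq\mathcal{T}_\infty^*$ for every $N$, so together with (b) the sequence $\{\mathcal{T}_N^*\}$ is monotone and bounded, hence convergent to some limit that is at most $\mathcal{T}_\infty^*$. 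For the reverse inequality, I plan to fix $\epsilon>0$, choose an admissible $\{x_j\}_{j=1}^\infty$ with $\mathcal{T}_\infty(\{x_j\})\geq\mathcal{T}_\infty^*-\epsilon$, and compare it to $\mathcal{T}_N(x_1,\ldots,x_N)=\mathcal{T}_\infty(x_1,\ldots,x_N,0,0,\ldots)$. The discrepancy is a tail sum that I expect to bound by an expression of the form $(1-p)^{N+w}\bigl[\tfrac{1}{2}\log(1+\gamma B)+\tfrac{w}{2}\log(1+\gamma B/w)\bigr]$, using admissibility ($x_j,\,B-\sum_{j=1}^k x_j\in[0,B]$) together with closed forms for the geometric tails. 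Since this bound vanishes as $N\to\infty$, I get $\liminf_N\mathcal{T}_N^*\geq\mathcal{T}_\infty^*-\epsilon$; letting $\epsilon\downarrow 0$ closes the gap. Finiteness of $\mathcal{T}_\infty^*$ drops out of the same uniform estimate applied at $N=0$.

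The only delicate point is in part (a): since the third sum of $\mathcal{T}_N$ depends only on the partial sums $\sum_{j=1}^k\xi_j$, it is not strictly concave on $\mathds{R}^N$ by itself, so the uniqueness claim must be extracted from the middle sum's strict concavity in the individual coordinates. Apart from this observation, the argument reduces to routine manipulations of geometric series.
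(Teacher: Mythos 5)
Your proposal is correct and follows essentially the same route as the paper's proof: Weierstrass plus concavity for existence and uniqueness in (a), zero-padding (equivalently, nesting of feasible sets) for (b), and a monotone-bounded-convergence argument closed by a geometric tail estimate of the form $(1-p)^{N+w}\cdot\mathrm{const}$ for (c). If anything, your write-up is slightly more careful than the paper's in two spots — you correctly locate the strict concavity needed for uniqueness in the middle sum (the paper only invokes concavity, which alone would not give uniqueness), and your use of an $\epsilon$-optimal admissible sequence in (c) avoids presupposing that $\mathcal{T}_\infty$ attains its supremum, whereas the paper truncates the maximizer $\{\xi_j^*\}$ directly.
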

\begin{proof}
(a) follows the fact that function $\mathcal{T}_N$ is a continuous bounded concave function defined on a compact set. (b) is due to the fact that domain of $\mathcal{T}_N$ is a subset of domain of $\mathcal{T}_{N+1}$. (c) follows from the
fact that $\mathcal{T}_N$ is a continuous bounded function and is increasing, i.e., $\mathcal{T}_N^* \leq \mathcal{T}_\infty^*$, and thus $\lim\limits_{N\rightarrow \infty}\mathcal{T}_N^*$ exists and it is finite. Indeed, the following inequality proves that this limit is $\mathcal{T}_\infty^*$ due to the  squeeze theorem.
\begin{align}
    &\mathcal{T}_N^*(\xi_1^{(N)*}, \cdots, \xi_N^{(N)*}) \nonumber\\
    &\quad\geq \mathcal{T}_N(\xi_1^*, \cdots, \xi_N^*) \nonumber \\
    &\quad=  \mathcal{T}_\infty^*(\xi_1^*, \cdots, \xi_\infty^*) - \sum_{j= N+1}^{\infty} \frac{p(1-p)^{j+w-1}}{2} \log(1+ \gamma \xi_j^*)\nonumber\\
     &\qquad  - \sum_{k= N+1}^{\infty} p^2 (1-p)^{k+w-1} \frac{w}{2} \log(1+\frac{\gamma}{w} (B- \sum_{j=1}^{k}\xi_j^*))\nonumber\\
     &\qquad + p(1-p)^{w+N}\frac{w}{2}\log(1+\gamma\frac{B-\sum_{j=1}^{N}\xi_j^*}{w})\nonumber\\
    &\quad \geq \mathcal{T}_\infty^* - \sum_{j= N+1}^{\infty} \frac{p(1-p)^{j+w-1}}{2} \log(1+ \gamma B) \nonumber\\
    &\qquad  - \sum_{k= N}^{\infty} p^2 (1-p)^{k+w-1} \frac{w}{2} \log(1+\frac{\gamma}{w} B) \nonumber\\
    &\quad = \mathcal{T}_\infty^* - \frac{(1-p)^{w+N}}{2} \left[\log(1+ \gamma B) + pw \log(1+\frac{\gamma}{w} B)\right] \nonumber\\
    &\quad = \mathcal{T}_\infty^* - \epsilon_N \:,  \label{eq:Lem:Target-Function-limits}
\end{align}
where $\epsilon_N \triangleq \frac{(1-p)^{w+N}}{2} \left[\log(1+ \gamma B) + pw \log(1+\frac{\gamma}{w} B)\right]$ is a positive number with property $\lim\limits_{N\rightarrow \infty}\epsilon_N = 0$.
\end{proof}
\indent In Corollary~\ref{corol:Theta_N:strictly-Inc} in the sequel, we will prove that $\mathcal{T}_N^*$ is indeed a \emph{strictly} increasing function, which is stronger than Lemma~\ref{Lem:Limits-of-target-function-LB}-part(B).\\
\indent The $N$-dimensional optimization problem can be solved by the KKT method. A \emph{necessary and sufficient} condition for sequence $\{\xi_j^{(N)*}\}_{j=1}^N$ to attain $\mathcal{T}^*_N$ is to satisfy
\begin{subequations} \label{finit:kkt}
\begin{align}
&p(1-p)^{w+j-1}\frac{\gamma}{2(1+\gamma
\xi_j^{(N)*})}\nonumber\\
&\:\: -\sum\limits_{k=j}^{N-1}
p^2(1-p)^{k+w-1}\frac{\frac{\gamma}{2}}{1+\gamma\frac{B-\sum_{i=1}^{k}
\xi_i^{(N)*}}{w}} \nonumber\\
&\:\: -p(1-p)^{w+N-1}\frac{\frac{\gamma}{2}}{1+\gamma\frac{B-\sum_
{i=1}^{N} \xi_i^{(N)*}}{w}}-\lambda^{(N)}+\mu^{(N)}_j=0\:, \label{finit:kkt1}\\
&\mu^{(N)}_j \xi_j^{(N)*}=0\:, \label{finit:kkt2}\\
&\lambda^{(N)}(B-\sum\limits_{i=1}^N \xi_i^{(N)*})=0\:,\label{finit:kkt3}
\end{align}
\end{subequations}
where $j \in \{1, 2, \cdots, N\}$, and $\lambda^{(N)}, \{\mu^{(N)}_i\}_{i=1}^N$ are non-negative real numbers.\\
\subsection{Properties of Sequence $\{\xi_j^{(N)*}\}_{j=1}^N$.}
In this subsection, we investigate the properties of the sequence $\{\xi_j^{(N)*}\}_{j=1}^N$ which achieves  $\mathcal{T}_N^*$ based on the KKT conditions~\eqref{finit:kkt}.
\begin{definition}
  The effective length of an admissible energy sequence is the largest (time) index beyond which the rest of the sequence vanishes. Specifically, $N_{eff}$ is called the effective length of $\{\xi_j^{(N)*}\}_{j=1}^N$ if
  \begin{align} \label{Eq:def:Effective-length}
  &\xi_{N_{eff}} > 0 \\
  &\xi_j^{(N)*} = 0 \quad \text{for all} \quad\: j > N_{eff}.
\end{align}
\end{definition}

\begin{lem} \label{lem:largest_nonzero}
The effective length of sequence $\{\xi_j^{(N)*}\}_{j=1}^N$ is $N$.
\end{lem}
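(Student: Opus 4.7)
The plan is to argue by contradiction: suppose $N_{eff} < N$, so that $\xi_N^{(N)*} = 0$, and then derive an impossible sign from the KKT system \eqref{finit:kkt}. Throughout the argument, let $R \triangleq B - \sum_{i=1}^N \xi_i^{(N)*} \geq 0$ denote the unused battery budget and let $m \triangleq N_{eff}$ be the largest index at which the optimal sequence is strictly positive. The proof splits on whether the battery constraint is slack ($R>0$) or binding ($R=0$); each branch invokes a single KKT equation at a carefully chosen index, together with one geometric-series identity.

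For the branch $R>0$, I would read off \eqref{finit:kkt1} at $j=N$. The sum $\sum_{k=N}^{N-1}$ is empty, so the condition becomes
\begin{equation*}
p(1-p)^{w+N-1}\frac{\gamma}{2}\left[\frac{1}{1+\gamma\xi_N^{(N)*}}-\frac{1}{1+\gamma R/w}\right] = \lambda^{(N)} - \mu_N^{(N)}.
\end{equation*}
Evaluated at $\xi_N^{(N)*}=0$, the left-hand side equals $p(1-p)^{w+N-1}\tfrac{\gamma}{2}\cdot\tfrac{\gamma R/w}{1+\gamma R/w}$, which is strictly positive when $R>0$. But $R>0$ means the energy constraint \eqref{eq:N-dimensional-optimization:energy-constraint} is slack, so \eqref{finit:kkt3} forces $\lambda^{(N)}=0$; then $-\mu_N^{(N)}>0$ contradicts $\mu_N^{(N)}\geq 0$.

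For the branch $R=0$, I would instead invoke \eqref{finit:kkt1} at $j=m$, where $\mu_m^{(N)}=0$ by \eqref{finit:kkt2} since $\xi_m^{(N)*}>0$. Setting $R=0$ turns every $1/(1+\gamma R/w)$ factor into $1$, so the condition collapses to
\begin{equation*}
p(1-p)^{w+m-1}\frac{\gamma}{2(1+\gamma\xi_m^{(N)*})} - \frac{\gamma}{2}\Bigl[\sum_{k=m}^{N-1} p^2(1-p)^{k+w-1} + p(1-p)^{w+N-1}\Bigr] = \lambda^{(N)}.
\end{equation*}
The bracketed sum telescopes via a short geometric-series calculation to $p(1-p)^{w+m-1}$, after which the left-hand side simplifies to $-p(1-p)^{w+m-1}\tfrac{\gamma^{2}\xi_m^{(N)*}}{2(1+\gamma\xi_m^{(N)*})}$, which is strictly negative because $\xi_m^{(N)*}>0$. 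This contradicts $\lambda^{(N)}\geq 0$, and the proof is finished.

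I expect the only real technical hurdle to be spotting and verifying the telescoping identity $\sum_{k=m}^{N-1} p^2(1-p)^{k+w-1} + p(1-p)^{w+N-1} = p(1-p)^{w+m-1}$, which absorbs the ``dangling'' last term of \eqref{finit:kkt1} into the geometric sum and produces the clean cancellation. Everything else is careful bookkeeping of the KKT conditions \eqref{finit:kkt} and a routine sign check.
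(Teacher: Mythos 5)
Your proof is correct and takes essentially the same approach as the paper: a sign contradiction in the KKT system \eqref{finit:kkt}, using the geometric telescoping identity $\sum_{k=m}^{N-1}p^2(1-p)^{k+w-1}+p(1-p)^{w+N-1}=p(1-p)^{w+m-1}$. Your case split on $R>0$ versus $R=0$ is marginally more economical than the paper's (which treats the all-zero case, the case $B'=B$ at index $J$, and the case $B'<B$ at index $J+1$ separately), but the underlying mechanism is identical.
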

\begin{proof}
First, we show that there exists at least one non-zero elements in the sequence of $\{\xi_j^{(N)*}\}_{j=1}^N$. Assume that this claim is not valid, and an all zero sequence is the optimal solution which satisfies the KKT conditions~\eqref{finit:kkt}. Thus,
\begin{align}
&\xi_j^{(N)*} = 0 \quad \text{for} \quad j\in \{1, \cdots, N\} \label{Eq:lem:nonzero-property-2}\\
&\sum_{i=1}^{N} \xi_i^{(N)*} = 0 \label{Eq:lem:nonzero-property-3}
\end{align}

Hence, $\lambda^{(N)} = 0$ due to \eqref{finit:kkt3}. From $\lambda^{(N)} = 0$, $\mu_{N}\geq 0$, \eqref{finit:kkt1} (when $j=N$ is set) and \eqref{Eq:lem:nonzero-property-3}, we should have
\begin{align}
p(1-p)^{w+N-1}\frac{\gamma}{2(1+\gamma
\xi_N^{(N)*})}-p(1-p)^{w+N-1}\frac{\frac{\gamma}{2}}{1+\gamma\frac{B}{w}}\leq 0
\end{align}
However, this equation is only valid when $\xi_{N}\geq \frac{\gamma}{w}B > 0$ due to $B>0$. This inequality contradicts with \eqref{Eq:lem:nonzero-property-2}.
Consequently, the largest (time) index of the non-zero element ($J$) exists with property~\eqref{Eq:def:Effective-length}.\\
Second, we prove that $J < N$ is not valid in the following.
Suppose $J<N$. Define $B'=\sum_{j=1}^J \xi_j^{(N)*}$.  Note
that we have
\begin{align}\label{proof:eq:constraint_mltiplyers}
  \lambda^{(N)} &\geq 0 \\
  \mu^{(N)}_J &=0
\end{align}
where the second inequality is due to $\xi_J^{(N)*}>0$ and~\eqref{finit:kkt2}.
First, assume $B=B'$; Setting $j=J$ in~\eqref{finit:kkt1} gives
\begin{align*}
\lambda^{(N)} = -p(1-p)^{w+J-1}\frac{\gamma^2 \xi_J^{(N)*}}{2(1+\gamma \xi_J^{(N)*})},
\end{align*}
which is contradictory with the fact that $\lambda^{(N)}\geq
0$. Next consider the case $B'<B$. In this case, $\lambda^{(N)} = 0$. Setting $j=J+1$ in~\eqref{finit:kkt1} gives
\begin{align}
[p(1-p)^{w+J}\frac{\gamma}{2}-p(1-p)^{w+J}\frac{\frac{\gamma}{2}}{1+\gamma\frac{B-B'}{w}}]+\mu_{J+1}=0.\label{eq:contradiction}
\end{align}
However, in~\eqref{eq:contradiction}, the bracket is strictly positive. Hence, the left hand side of~\eqref{eq:contradiction} is strictly positive. This lead to a contradiction. Therefore, $J=N$ always holds.
\end{proof}

\begin{corol} \label{corol:nonzero_last_term}
The last element of the optimal sequence $\{\xi_i^{(N)*}\}_{i=1}^N$ is always non-zero, i.e., $\xi_N^{(N)*} > 0$ and coefficient $\mu^{(N)}_N=0$.
\end{corol}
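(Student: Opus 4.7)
The plan is to observe that this corollary is essentially an immediate consequence of Lemma~\ref{lem:largest_nonzero} combined with the complementary slackness KKT condition~\eqref{finit:kkt2}. There is no new analytical content to develop; the work is packaging.

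First, I would invoke Lemma~\ref{lem:largest_nonzero}, which says that the effective length $N_{eff}$ of $\{\xi_j^{(N)*}\}_{j=1}^N$ equals $N$. By the definition of effective length~\eqref{Eq:def:Effective-length}, this means $\xi_{N_{eff}} = \xi_N^{(N)*} > 0$, which is the first claim.

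Second, I would plug this into the KKT complementary slackness condition \eqref{finit:kkt2} evaluated at $j=N$, namely $\mu_N^{(N)} \xi_N^{(N)*} = 0$. Since $\xi_N^{(N)*} > 0$ and $\mu_N^{(N)} \geq 0$ (as a Lagrange multiplier associated with the nonnegativity constraint), we must have $\mu_N^{(N)} = 0$, which is the second claim.

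There is no real obstacle here; the only thing to be careful about is that Lemma~\ref{lem:largest_nonzero} has already ruled out both the all-zero scenario and the case $J<N$, so the positivity of the last component is guaranteed without any further case analysis. The corollary simply records these two facts for later use (in particular, setting $\mu_N^{(N)}=0$ streamlines manipulation of \eqref{finit:kkt1} at the index $j=N$ in subsequent derivations such as the recursion \eqref{eq:E_j^*_recursive}).
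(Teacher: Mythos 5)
Your proposal is correct and matches the paper's own proof exactly: both deduce $\xi_N^{(N)*}>0$ from Lemma~\ref{lem:largest_nonzero} (effective length equals $N$) and then obtain $\mu^{(N)}_N=0$ from the complementary slackness condition~\eqref{finit:kkt2}. Nothing further is needed.
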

\begin{proof}
Due to Lemma~\ref{lem:largest_nonzero}, the largest non-zero element is the $N^{th}$ element of the sequence. $\mu^{(N)}_N=0$ is due to~\eqref{finit:kkt2}.
\end{proof}

\begin{corol} \label{corol:Theta_N:strictly-Inc}
$\mathcal{T}_N^*$ is a \emph{strictly} increasing function of $N$.
\end{corol}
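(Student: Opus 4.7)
The plan is to compare the $N$-dimensional problem with the $(N+1)$-dimensional problem by embedding the former inside the latter via zero-padding, and then to use the uniqueness of the maximizer (Lemma~\ref{Lem:Limits-of-target-function-LB}(a)) together with Corollary~\ref{corol:nonzero_last_term} to rule out the padded candidate as optimal.

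First I would verify the embedding identity
\begin{equation*}
\mathcal{T}_{N+1}(\xi_1,\ldots,\xi_N,0) = \mathcal{T}_{N}(\xi_1,\ldots,\xi_N),
\end{equation*}
for every admissible $N$-tuple $(\xi_1,\ldots,\xi_N)$. This is a routine check using the definition of $\mathcal{T}_N$: appending a zero leaves the first two sums unchanged, while the third sum in $\mathcal{T}_{N+1}$ picks up exactly one extra term at index $k=N+1$ in which the running sum $\sum_{j=1}^{N+1}\xi_j$ collapses to $\sum_{j=1}^{N}\xi_j$, and this extra term combines with the tail term $p(1-p)^{w+N+1}\tfrac{w}{2}\log(\cdots)$ in $\mathcal{T}_{N+1}$ to reproduce the tail term $p(1-p)^{w+N}\tfrac{w}{2}\log(\cdots)$ of $\mathcal{T}_{N}$; the geometric-series identity $p^2(1-p)^{w+N}+p(1-p)^{w+N+1}=p(1-p)^{w+N}$ makes the bookkeeping work out.

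Given the embedding, I would evaluate $\mathcal{T}_{N+1}$ at the padded maximizer $(\xi_1^{(N)*},\ldots,\xi_N^{(N)*},0)$ of the $N$-dimensional problem. This candidate is admissible for the $(N+1)$-dimensional problem and attains the value $\mathcal{T}_N^*$. However, by Corollary~\ref{corol:nonzero_last_term} applied at index $N+1$, the true maximizer $\{\xi_j^{(N+1)*}\}_{j=1}^{N+1}$ satisfies $\xi_{N+1}^{(N+1)*}>0$, so the padded candidate cannot coincide with the maximizer. Since $\mathcal{T}_{N+1}$ is strictly concave on the admissible simplex and Lemma~\ref{Lem:Limits-of-target-function-LB}(a) guarantees a unique maximizer, every non-maximizing admissible point attains a value strictly less than $\mathcal{T}_{N+1}^*$. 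Thus
\begin{equation*}
\mathcal{T}_N^* = \mathcal{T}_{N+1}(\xi_1^{(N)*},\ldots,\xi_N^{(N)*},0) < \mathcal{T}_{N+1}^*,
\end{equation*}
which is the desired strict monotonicity.

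The only step that requires any care is the embedding identity, because the tail term in the definition of $\mathcal{T}_N$ has a slightly different form than the generic summand and one must check that the indices match after padding. The rest is a direct appeal to uniqueness of the maximizer and to Corollary~\ref{corol:nonzero_last_term}, both already established.
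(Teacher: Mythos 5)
Your proposal is correct and follows essentially the same route as the paper: evaluate $\mathcal{T}_{N+1}$ at the zero-padded maximizer $(\xi_1^{(N)*},\ldots,\xi_N^{(N)*},0)$, note it cannot be the unique maximizer because $\xi_{N+1}^{(N+1)*}>0$ by Corollary~\ref{corol:nonzero_last_term}, and conclude $\mathcal{T}_N^*<\mathcal{T}_{N+1}^*$. Your explicit verification of the embedding identity via $p^2(1-p)^{w+N}+p(1-p)^{w+N+1}=p(1-p)^{w+N}$ is a detail the paper leaves implicit, and your bookkeeping there is accurate.
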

\begin{proof}
\begin{eqnarray*}
\mathcal{T}_N^*(\xi_1^{(N)*}, \cdots, \xi_N^{(N)*}) &=& \mathcal{T}_{N+1}(\xi_1^{(N)*}, \cdots, \xi_N^{(N)*}, 0)\nonumber\\
&<& \mathcal{T}_{N+1}^*(\xi_1^{(N+1)*}, \cdots, \xi_N^{(N+1)*}, \xi_{N+1}^{(N+1)*})
\end{eqnarray*}
where the last inequality is due to the fact that $(\xi_1^{(N)*}, \cdots, \xi_N^{(N)*}, 0)$ can not be the unique maximizer of $\mathcal{T}_{N+1}$, because $\xi_{N+1}^{(N+1)*}>0$ due to Corollary~\ref{corol:nonzero_last_term}.
\end{proof}

\begin{lem} \label{lem:Inequality}
For any $j \in \{1, 2, \cdots, N-1\}$, we have
\begin{equation}
  \xi_j^{(N)*} \leq \frac{B- \sum_{i=1}^{j}\xi_i^{(N)*}}{w}\:, \label{eq:lem:Inequality-1}
\end{equation}
where inequality holds in the \emph{strict sense} for $j < N$.
\end{lem}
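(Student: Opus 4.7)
\emph{Proof plan.} My plan is to extract a three-term recursion from the KKT conditions~\eqref{finit:kkt} by differencing two consecutive equations. Introduce the shorthand $Y_j := (B - \sum_{i=1}^{j}\xi_i^{(N)*})/w$. Subtracting the KKT equation at index $j+1$ from that at index $j$ (after cancelling the common factor $\gamma/2$), the long sum indexed by $k$ telescopes to a single term, and dividing through by $p(1-p)^{w+j-1}$ yields
\begin{equation}\label{eq:plan:recursion}
\frac{1}{1+\gamma \xi_j^{(N)*}} = \frac{1-p}{1+\gamma \xi_{j+1}^{(N)*}} + \frac{p}{1+\gamma Y_j} + \frac{2(\mu_{j+1}^{(N)} - \mu_{j}^{(N)})}{\gamma\, p(1-p)^{w+j-1}}
\end{equation}
for $j\in\{1,\ldots,N-1\}$. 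Since the target inequality is equivalent to $\frac{1}{1+\gamma\xi_j^{(N)*}} \geq \frac{1}{1+\gamma Y_j}$, the recursion~\eqref{eq:plan:recursion} is the natural vehicle.

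First I would pin down the boundary at $j=N$. The KKT equation there reduces to $\frac{p(1-p)^{w+N-1}}{1+\gamma\xi_N^{(N)*}} = \frac{p(1-p)^{w+N-1}}{1+\gamma Y_N} + \frac{2\lambda^{(N)}}{\gamma}$, using $\mu_N^{(N)}=0$ from Corollary~\ref{corol:nonzero_last_term}. If $\lambda^{(N)}>0$, then~\eqref{finit:kkt3} forces $Y_N=0$ and the displayed equality forces $\xi_N^{(N)*}<0$, contradicting $\xi_N^{(N)*}>0$. Hence $\lambda^{(N)}=0$ and $\xi_N^{(N)*}=Y_N$, which in particular gives $Y_j \geq Y_N = \xi_N^{(N)*} > 0$ for every $j \leq N$.

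Next, I would run a backward induction on $j$ from $N-1$ down to $1$ with inductive hypothesis $\xi_{j+1}^{(N)*} \leq Y_{j+1}$ (equality only at $j+1=N$). If $\mu_j^{(N)} > 0$, complementary slackness gives $\xi_j^{(N)*}=0 < Y_j$ and the step is done, so assume $\mu_j^{(N)}=0$. If $\xi_{j+1}^{(N)*}>0$, then $Y_{j+1}<Y_j$ and $\xi_{j+1}^{(N)*} \leq Y_{j+1} < Y_j$, so the convex combination $\frac{1-p}{1+\gamma \xi_{j+1}^{(N)*}} + \frac{p}{1+\gamma Y_j}$ strictly exceeds $\frac{1}{1+\gamma Y_j}$. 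If instead $\xi_{j+1}^{(N)*}=0$, the same strict bound holds because $(1-p) + \frac{p}{1+\gamma Y_j} > \frac{1}{1+\gamma Y_j}$ whenever $Y_j>0$. In either branch, combining this with $\mu_{j+1}^{(N)} \geq 0$ in~\eqref{eq:plan:recursion} yields $\frac{1}{1+\gamma\xi_j^{(N)*}} > \frac{1}{1+\gamma Y_j}$, i.e., $\xi_j^{(N)*} < Y_j$, closing the induction.

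The main obstacle is bookkeeping the Lagrange multipliers: Lemma~\ref{lem:largest_nonzero} only guarantees $\xi_N^{(N)*}>0$, so some interior $\xi_j^{(N)*}$ could in principle vanish and contribute a positive $\mu_j^{(N)}$ whose sign in~\eqref{eq:plan:recursion} points the wrong way. The case split above sidesteps this: whenever $\mu_j^{(N)}>0$ the claim is immediate, while in the complementary branch only the sign $\mu_{j+1}^{(N)}\geq 0$ is needed, which is always on our side.
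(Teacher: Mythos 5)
Your proof is correct, and it takes a genuinely different route from the paper's. The paper argues locally from the single KKT equation at index $j$: in the nontrivial case $\xi_j^{(N)*}>0$ (so $\mu_j^{(N)}=0$), it uses $\lambda^{(N)}\geq 0$ to lower-bound $\frac{1}{1+\gamma \xi_j^{(N)*}}$ by the full sum over $k=j,\dots,N-1$ plus the tail term, replaces every $B-\sum_{i=1}^{k}\xi_i^{(N)*}$ by $B-\sum_{i=1}^{j}\xi_i^{(N)*}$, collapses the geometric series, and lands directly on $\frac{1}{1+\gamma \xi_j^{(N)*}}\geq \frac{1}{1+\gamma(B-\sum_{i=1}^{j}\xi_i^{(N)*})/w}$, with strictness for $j<N$ supplied by Corollary~\ref{corol:nonzero_last_term}; the case $\xi_j^{(N)*}=0$ is dispatched separately. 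You instead difference consecutive KKT equations to obtain the two-term recursion --- which is precisely the paper's~\eqref{eq:successive_terms-0}, used there only later to prove Lemma~\ref{lem:decrease:j} --- anchor it with $\lambda^{(N)}=0$ and the terminal equality $\xi_N^{(N)*}=B^{(N)}/w$, and run a backward induction. Your derivation of the anchor is self-contained (it needs only Corollary~\ref{corol:nonzero_last_term} and the KKT system), so there is no circularity even though the paper itself only establishes $\lambda^{(N)}=0$ and~\eqref{eq:last_element_reside_equality} \emph{after} this lemma and using it; and your case split on the sign of $\mu_j^{(N)}$ correctly accounts for the fact that, at this stage, interior zero elements have not yet been ruled out --- only $\mu_{j+1}^{(N)}\geq 0$ enters with the favorable sign. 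The trade-off is that the paper's argument is shorter and purely local in $j$, while yours is inductive but delivers $\lambda^{(N)}=0$ and the equality at $j=N$ as by-products that the paper must prove separately afterwards.
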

\begin{proof}
If $\xi_j^{(N)*} = 0$, the inequality holds because $\sum_{i=1}^{j}\xi_i^{(N)*}<B$ according to~\eqref{eq:N-dimensional-optimization:energy-constraint} and Corollary~\ref{corol:nonzero_last_term} when $j<N$. Otherwise, if $\xi_j^{(N)*} >0$, then $\mu^{(N)}_j =0$ due to \eqref{finit:kkt2}. Hence, we can derive the expressions \eqref{eq:proof:dec1}-\eqref{eq:proof:dec4} at the bottom of this page, where \eqref{eq:proof:dec1} is due to \eqref{finit:kkt1} and $\lambda^{(N)} \geq 0$, \eqref{eq:proof:dec2} follows from $B-\sum_{i=1}^{k-w}
\xi_i^{(N)*} \leq B-\sum_{i=1}^{j}
\xi_i^{(N)*}$, for $k = w+j+1, \cdots, w+N-1$, due to~\eqref{eq:N-dimensional-optimization:energy-constraint}, \eqref{eq:proof:dec3} holds \emph{strictly} only if $j<N$, because $B-\sum_{i=1}^{N}
\xi_i^{(N)*} < B-\sum_{i=1}^{j}
\xi_i^{(N)*}$ due to Corollary \ref{corol:nonzero_last_term} for $j<N$. Therefore, the lemma is concluded from~\eqref{eq:proof:dec4}.
\end{proof}

\begin{figure*}[!b]
\hrulefill
\setcounter{MYtempeqncnt}{\value{equation}}
\setcounter{equation}{33}
\begin{align}
p(1-p)^{w+j-1}\frac{\frac{\gamma}{2}}{1+\gamma
\xi_j^{(N)*}}&\geq \sum\limits_{k=j}^{N-1}
p^2(1-p)^{k+w-1}\frac{\frac{\gamma}{2}}{1+\gamma\frac{B-\sum_{i=1}^{k}
\xi_i^{(N)*}}{w}}+ p(1-p)^{w+N-1}\frac{\frac{\gamma}{2}}{1+\gamma\frac{B-\sum_
{i=1}^{N}\xi_i^{(N)*}}{w}} \label{eq:proof:dec1}\\
&\geq \sum\limits_{k=j}^{N-1}
p^2(1-p)^{k+w-1}\frac{\frac{\gamma}{2}}{1+\gamma\frac{B-\sum_{i=1}^{j}
\xi_i^{(N)*}}{w}}+ p(1-p)^{w+N-1}\frac{\frac{\gamma}{2}}{1+\gamma\frac{B-\sum_
{i=1}^{N}\xi_i^{(N)*}}{w}} \label{eq:proof:dec2}\\
&= p(1-p)^{w+j-1}(1-(1-p)^{(N-j)})\frac{\frac{\gamma}{2}}{1+\gamma\frac{B-\sum_{i=1}^{j}
\xi_i^{(N)*}}{w}}
+ p(1-p)^{w+N-1}\frac{\frac{\gamma}{2}}{1+\gamma\frac{B-\sum_
{i=1}^{N}\xi_i^{(N)*}}{w}} \nonumber\\
&\geq p(1-p)^{w+j-1}(1-(1-p)^{(N-j)})\frac{\frac{\gamma}{2}}{1+\gamma\frac{B-\sum_{i=1}^{j}
\xi_i^{(N)*}}{w}}
+ p(1-p)^{w+N-1}\frac{\frac{\gamma}{2}}{1+\gamma\frac{B-\sum_
{i=1}^{j}\xi_i^{(N)*}}{w}} \label{eq:proof:dec3}\\
&= p(1-p)^{w+j-1} \frac{\frac{\gamma}{2}}{1+\gamma\frac{B-\sum_{i=1}^{j}
\xi_i^{(N)*}}{w}}\:, \label{eq:proof:dec4}
\end{align}
\setcounter{equation}{37}
\end{figure*}

\indent For any fixed $N$, let define parameter
\begin{equation*}
  B^{(N)} = B- \sum_{i=1}^{N} \xi_i^{(N)*}\:.
\end{equation*}
From Corollary~\ref{corol:nonzero_last_term} and Lemma~\ref{lem:Inequality}, we conclude that
\begin{equation} \label{eq:positive-remain-battery}
  B^{(N)} > 0\:.
\end{equation}
Also, from~\eqref{finit:kkt3} and~\eqref{eq:positive-remain-battery}, we conclude that
\begin{equation}\label{eq:lambda_zero}
  \lambda^{(N)} = 0\:.
\end{equation}
\indent In the following lemma, we investigate the behaviour of sequence $\{\xi_j^{(N)*}\}_{j=1}^N$ as a function of time index $j$ when $N$ is fixed.
\begin{lem} \label{lem:decrease:j}
$\{\xi_j^{(N)*}\}_{j=1}^N$ is a strictly decreasing positive sequence.
\end{lem}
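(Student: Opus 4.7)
The plan is to prove positivity of the entire sequence first, and then get strict decrease by differencing consecutive KKT equations.

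First I would establish that every $\xi_j^{(N)*}>0$ for $j\in\{1,\ldots,N\}$. Corollary \ref{corol:nonzero_last_term} already gives $\xi_N^{(N)*}>0$, so I would argue by contradiction: suppose there exists some index $j<N$ with $\xi_j^{(N)*}=0$, and let $j^{\star}$ be the \emph{largest} such index, so $\xi_{j^{\star}+1}^{(N)*},\ldots,\xi_N^{(N)*}>0$ (in particular $\mu^{(N)}_{j^{\star}+1}=0$). Subtracting KKT condition \eqref{finit:kkt1} at index $j^{\star}+1$ from the one at index $j^{\star}$ (using $\lambda^{(N)}=0$ from \eqref{eq:lambda_zero}) collapses the long sum into a single term and yields, after dividing by $p(1-p)^{w+j^{\star}-1}\gamma/2$,
\begin{equation*}
\frac{p+\gamma\,\xi_{j^{\star}+1}^{(N)*}}{1+\gamma\,\xi_{j^{\star}+1}^{(N)*}} + \frac{2\,\mu^{(N)}_{j^{\star}}}{p(1-p)^{w+j^{\star}-1}\gamma} \;=\; \frac{p}{1+\gamma\frac{B-\sum_{i=1}^{j^{\star}}\xi_i^{(N)*}}{w}}.
\end{equation*}
The identity $\frac{p+x}{1+x}-p=\frac{(1-p)x}{1+x}$ shows the first term on the left is strictly greater than $p$ (since $\xi_{j^{\star}+1}^{(N)*}>0$), while the right side is strictly less than $p$ (because $B-\sum_{i=1}^{j^{\star}}\xi_i^{(N)*}>0$ by \eqref{eq:positive-remain-battery}). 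The extra $\mu^{(N)}_{j^{\star}}\geq 0$ only makes the left side larger, so this is a contradiction, forcing $\xi_j^{(N)*}>0$ for all $j$.

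Next I would exploit this positivity: by complementary slackness \eqref{finit:kkt2}, every $\mu^{(N)}_j=0$. I would then write \eqref{finit:kkt1} at index $j$ and at index $j+1$ (for $j<N$) and subtract; the telescoping sum leaves only the $k=j$ term, and after dividing by $p(1-p)^{w+j-1}\gamma/2$ I obtain the clean recursion
\begin{equation*}
\frac{1}{1+\gamma\,\xi_j^{(N)*}} \;-\; \frac{1-p}{1+\gamma\,\xi_{j+1}^{(N)*}} \;=\; \frac{p}{1+\gamma\,\frac{B-\sum_{i=1}^{j}\xi_i^{(N)*}}{w}}.
\end{equation*}
Applying the strict form of Lemma \ref{lem:Inequality} (valid precisely for $j<N$), $\xi_j^{(N)*}<(B-\sum_{i=1}^{j}\xi_i^{(N)*})/w$, gives $\frac{p}{1+\gamma(B-\sum)/w}<\frac{p}{1+\gamma\,\xi_j^{(N)*}}$; substituting into the identity yields $\frac{1-p}{1+\gamma\,\xi_j^{(N)*}}<\frac{1-p}{1+\gamma\,\xi_{j+1}^{(N)*}}$, hence $\xi_{j+1}^{(N)*}<\xi_j^{(N)*}$.

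The main obstacle is the positivity step, because the recursive structure that gives strict monotonicity only makes sense when all multipliers vanish, and the KKT system alone does not rule out intermediate zeros directly; the argument has to extract a contradiction by combining the ``$p$-on-the-left'' lower bound coming from the positivity of $\xi_{j^{\star}+1}^{(N)*}$ with the ``$p$-on-the-right'' upper bound coming from the strictly positive residual battery $B^{(N)}>0$. Once this is in place, the strict-decrease step is a short manipulation of the consecutive KKT identities together with Lemma \ref{lem:Inequality}.
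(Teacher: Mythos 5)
Your proof is correct, and it rests on the same engine as the paper's: subtracting consecutive instances of the KKT condition \eqref{finit:kkt1} to collapse the sums into the single $k=j$ term, and then invoking the strict inequality of Lemma~\ref{lem:Inequality} for $j<N$. Where you differ is in how positivity (equivalently, the vanishing of the multipliers $\mu^{(N)}_j$) is secured. The paper runs a single downward induction from $j=N-1$: at each step it keeps the multipliers in the differenced identity, uses $\mu^{(N)}_{j+1}=0$ and $\mu^{(N)}_j\geq 0$ to conclude $\xi_{j+1}^{(N)*}<\xi_j^{(N)*}$, deduces $\xi_j^{(N)*}>\xi_{j+1}^{(N)*}>0$ and hence $\mu^{(N)}_j=0$, and only then moves to the next index — so positivity and monotonicity are established simultaneously. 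You instead prove positivity up front by a standalone contradiction at the largest zero index $j^\star$, pitting the lower bound $\frac{p+\gamma\xi_{j^\star+1}^{(N)*}}{1+\gamma\xi_{j^\star+1}^{(N)*}}>p$ against the upper bound $\frac{p}{1+\frac{\gamma}{w}(B-\sum_{i=1}^{j^\star}\xi_i^{(N)*})}<p$ guaranteed by \eqref{eq:positive-remain-battery}; this is closer in spirit to the contradiction used in the second half of the proof of Lemma~\ref{lem:largest_nonzero} than to anything in the paper's proof of this lemma. Both arguments are sound; the paper's version is more economical because the monotone chain $\xi_j^{(N)*}>\xi_{j+1}^{(N)*}>\cdots>\xi_N^{(N)*}>0$ delivers positivity for free, whereas yours decouples the two claims, which makes the clean multiplier-free recursion available as a self-contained identity before the monotonicity step and arguably makes the logical dependencies easier to audit. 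One cosmetic remark: in your first step you invoke $\lambda^{(N)}=0$ from \eqref{eq:lambda_zero}, but this is not actually needed there since $\lambda^{(N)}$ cancels when the two KKT equations are subtracted.
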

\begin{proof}
From~\eqref{finit:kkt1}, for two successive terms $j<N$ and $j+1$, we have
\begin{align*}
&\frac{p(1-p)^{w+j-1} (\frac{\gamma}{2})}{1+\gamma
\xi_j^{(N)*}}-\sum\limits_{k=j}^{N-1}
\frac{p^2(1-p)^{w+k-1} (\frac{\gamma}{2})}{1+\gamma\frac{B-\sum_{i=1}^{k}
\xi_i^{(N)*}}{w}}\nonumber\\
&-\frac{p(1-p)^{w+N-1}(\frac{\gamma}{2})}{1+\gamma\frac{B-\sum_
{i=1}^{N} \xi_i^{(N)*}}{w}}-\lambda^{(N)}+\mu^{(N)}_j=0
\end{align*}
and
\begin{align*}
&\frac{p(1-p)^{w+j} (\frac{\gamma}{2})}{1+\gamma
\xi_{j+1}^{(N)*}}-\sum\limits_{k=j+1}^{N-1}
\frac{p^2(1-p)^{w+k-1} (\frac{\gamma}{2})}{1+\gamma\frac{B-\sum_{i=1}^{k}
\xi_i^{(N)*}}{w}}\nonumber\\
&-\frac{p(1-p)^{w+N-1}(\frac{\gamma}{2})}{1+\gamma\frac{B-\sum_
{i=1}^{N} \xi_i^{(N)*}}{w}}-\lambda^{(N)}+\mu^{(N)}_{j+1}=0
\end{align*}
respectively. Subtracting the first equation from the second one, we obtain
\begin{eqnarray}
\frac{p(1-p)^{w+j}(\frac{\gamma}{2})}{1+\gamma \xi_{j+1}^{(N)*}} &=& \frac{p(1-p)^{w+j-1}(\frac{\gamma}{2})}{1+\gamma
\xi_{j}^{(N)*}}\nonumber\\
&&- \frac{p^2(1-p)^{w+j-1}(\frac{\gamma}{2})}{1+\gamma\frac{B-\sum_{i=1}^{j}
\xi_i^{(N)*}}{w}} - \mu^{(N)}_{j+1} + \mu^{(N)}_j \nonumber\\
\label{eq:successive_terms-0}
\end{eqnarray}
From Lemma~\ref{lem:Inequality} (for case $j<N$), we have $\frac{\frac{\gamma}{2}}{1+\gamma\frac{B-\sum_{i=1}^{j}
\xi_i^{(N)*}}{w}} < \frac{\frac{\gamma}{2}}{1+\gamma \xi_j^{(N)*}}$. Hence, \eqref{eq:successive_terms-0} can be simplified into
\begin{equation} \label{eq:successive_terms}
\frac{p(1-p)^{w+j}(\frac{\gamma}{2})}{1+\gamma \xi_{j+1}^{(N)*}} > \frac{p(1-p)^{w+j}(\frac{\gamma}{2})}{1+\gamma
\xi_{j}^{(N)*}} - \mu^{(N)}_{j+1} + \mu^{(N)}_j
\end{equation}
First, note that $\xi_{N}^{(N)*}> 0$ due to Corollary~\ref{corol:nonzero_last_term}. Second, start with $j = N-1$. In this case $\mu^{(N)}_j\geq 0$ and $\mu^{(N)}_{j+1} = 0$ due to Corollary~\ref{corol:nonzero_last_term}. Hence, from~\eqref{eq:successive_terms}, we conclude that $\xi_{j+1}^{(N)*}< \xi_{j}^{(N)*}$. Therefore, $\xi_{j}^{(N)*}> 0$ due to Corollary~\ref{corol:nonzero_last_term}, and thus $\mu^{(N)}_j = 0$ due to~\eqref{finit:kkt2}. Then, repeat this justification for $j=N-2, N-3, \ldots, 1$ in the descending order to establish this lemma for all $j\in\{1, \cdots, N-1\}$.
\end{proof}
\begin{remark}
Note that Lemma~\ref{lem:decrease:j} is different from Lemma~\ref{lem:largest_nonzero} regarding positivity property of each element. This lemma only guarantees that a \emph{finite} number of zero elements in sequence $\{\xi_j^{(N)*}\}_{j=1}^N$ exists, while Lemma~\ref{lem:decrease:j} guarantees no zero element exists and thus it is stronger. As a result of Lemma~\ref{lem:decrease:j}, it can be concluded that
\begin{equation}\label{eq:zero-mu}
  \mu^{(N)}_j = 0
\end{equation}
holds for any $j\in\{1, 2, \ldots, N\}$ in~\eqref{finit:kkt2}.
\end{remark}
\indent In the following lemma, we establish a recursive expression to simplify the KKT conditions and to obtain energy sequence $\{\xi_j^{(N)*}\}_{j=1}^N$, recursively.
\begin{lem} \label{Lemma:KKT:recursive}
The following identities hold for energy sequence $\{\xi_j^{(N)*}\}_{j=1}^N$, which satisfies the KKT conditions~\eqref{finit:kkt}.
\begin{subequations}
\begin{align}
  \xi_N^{(N)*} &= \frac{B^{(N)}}{w}\:. \label{eq:last_element_reside_equality}\\
    \frac{1-p}{1+\gamma \xi_{j+1}^{(N)*}} &= \frac{1}{1+\gamma \xi_j^{(N)*}} - \frac{p}{1+\frac{\gamma}{w}(B-\sum_{i=1}^{j} \xi_i^{(N)*})} \label{eq:KKT:recursive}
\end{align}
for $j\in\{1, 2, \ldots, N-1\}$.
\end{subequations}
\end{lem}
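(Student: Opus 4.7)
The plan is to derive both identities directly by algebraic manipulation of the KKT stationarity conditions \eqref{finit:kkt1}, exploiting the two simplifications we have already established: $\lambda^{(N)}=0$ from \eqref{eq:lambda_zero} and $\mu_j^{(N)}=0$ for every $j\in\{1,\ldots,N\}$ from \eqref{eq:zero-mu}. With both multipliers vanishing, \eqref{finit:kkt1} reduces to a clean equality between the ``present-slot'' marginal and a weighted sum of ``future-slot'' marginals, which is the structure I will exploit.

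For the boundary identity \eqref{eq:last_element_reside_equality}, I would specialize \eqref{finit:kkt1} to $j=N$. The sum $\sum_{k=j}^{N-1}$ is then empty, so after dropping $\lambda^{(N)}$ and $\mu_N^{(N)}$ the condition collapses to
\begin{equation*}
\frac{p(1-p)^{w+N-1}\,\gamma/2}{1+\gamma\,\xi_N^{(N)*}}
\;=\;
\frac{p(1-p)^{w+N-1}\,\gamma/2}{1+\gamma\,B^{(N)}/w}.
\end{equation*}
Cancelling the common positive factor and equating denominators immediately yields $\xi_N^{(N)*}=B^{(N)}/w$.

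For the recursion \eqref{eq:KKT:recursive}, I would write out \eqref{finit:kkt1} at indices $j$ and $j+1$ (with $1\le j\le N-1$), again with $\lambda^{(N)}=0$ and the relevant $\mu$'s equal to zero. The two equations share the tail sum $\sum_{k=j+1}^{N-1}(\cdot)$ and the final $p(1-p)^{w+N-1}(\cdot)$ term. Subtracting the $j{+}1$ equation from the $j$ equation therefore telescopes the tail and leaves only the $k=j$ summand together with the two ``present-slot'' terms. After dividing through by the common factor $p(1-p)^{w+j-1}\gamma/2$, the identity rearranges directly into the stated form
\begin{equation*}
\frac{1-p}{1+\gamma \xi_{j+1}^{(N)*}} \;=\; \frac{1}{1+\gamma \xi_j^{(N)*}} \;-\; \frac{p}{1+\tfrac{\gamma}{w}\bigl(B-\sum_{i=1}^{j} \xi_i^{(N)*}\bigr)}.
\end{equation*}

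I do not anticipate a real obstacle here: once the groundwork (Lemma~\ref{lem:largest_nonzero}, Corollary~\ref{corol:nonzero_last_term}, Lemma~\ref{lem:Inequality}, Lemma~\ref{lem:decrease:j}) has been used to force $\lambda^{(N)}=0$ and $\mu_j^{(N)}=0$ for all $j$, the lemma is pure bookkeeping on \eqref{finit:kkt1}. The only thing that requires attention is the index ranges when subtracting the two KKT equations, so that the telescoping of the $\sum_{k=j}^{N-1}$ block is carried out correctly and leaves exactly the single $k=j$ summand that produces the $p/(1+\tfrac{\gamma}{w}(B-\sum_{i=1}^{j}\xi_i^{(N)*}))$ term on the right-hand side.
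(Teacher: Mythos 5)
Your proposal is correct and follows essentially the same route as the paper: the $j=N$ case of \eqref{finit:kkt1} with $\lambda^{(N)}=\mu_N^{(N)}=0$ gives \eqref{eq:last_element_reside_equality}, and subtracting consecutive instances of \eqref{finit:kkt1} (which the paper has already recorded as \eqref{eq:successive_terms-0} in the proof of Lemma~\ref{lem:decrease:j}) together with \eqref{eq:zero-mu} gives \eqref{eq:KKT:recursive} after dividing by $p(1-p)^{w+j-1}\gamma/2$. The telescoping and index bookkeeping you flag is exactly the computation the paper performs; there is no gap.
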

\begin{proof}
The KKT condition~\eqref{finit:kkt1} can be simplified due to~\eqref{eq:lambda_zero} and~\eqref{eq:zero-mu}. For $j=N$, we have
\begin{equation*}
\frac{p(1-p)^{w+N-1}(\frac{\gamma}{2})}{1+\gamma \xi_N^{(N)*}} = \frac{p(1-p)^{w+N-1}(\frac{\gamma}{2})}{1+\gamma\frac{B-\sum_
{i=1}^{N}\xi_i^{(N)*}}{w}}
\end{equation*}
from which~\eqref{eq:last_element_reside_equality} is derived. For $j\in\{1, 2, \ldots, N-1\}$, \eqref{eq:KKT:recursive} is derived from~\eqref{eq:successive_terms-0} and~\eqref{eq:zero-mu}.

\end{proof}
\begin{remark}
From Lemma~\ref{Lemma:KKT:recursive}, we conclude that the equality in~\eqref{lem:Inequality} for $j=N$ always holds.
\end{remark}
\indent In the following lemma, we investigate the behaviour of sequence $\{\xi_j^{(N)*}\}_{N=j}^\infty$ as a function of $N$ for a fixed time instant $j$.
\begin{lem} \label{lem:decrease:N}
Consider sequence $\{\xi_j^{(N)*}\}_{N=j}^\infty$ for a given fixed $j\in \mathds{N}$. Then, $\{\xi_j^{(N)*}\}_{N=j}^\infty$ is a \emph{strictly} decreasing function of $N$.
\end{lem}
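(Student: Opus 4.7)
The plan is to argue by contradiction. I introduce the shorthand $u_j^{(M)} \triangleq 1/(1+\gamma \xi_j^{(M)*})$ and $v_j^{(M)} \triangleq 1/(1+\tfrac{\gamma}{w}(B-\sum_{i=1}^{j}\xi_i^{(M)*}))$, and observe that the KKT recursion \eqref{eq:KKT:recursive} takes the clean form $(1-p)u_{j+1}^{(M)} = u_j^{(M)} - p\, v_j^{(M)}$, while the boundary identity \eqref{eq:last_element_reside_equality} is exactly $u_N^{(N)} = v_N^{(N)}$. My strategy is to suppose $\xi_1^{(N+1)*} \geq \xi_1^{(N)*}$ for some $N$ and then derive two incompatible bounds on $\xi_{N+1}^{(N+1)*}$ in terms of $\xi_N^{(N)*}$.

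Starting from this hypothesis, a short induction on $j$ through the recursion propagates the inequality: if $u_j^{(N+1)} \leq u_j^{(N)}$, then $\xi_j^{(N+1)*} \geq \xi_j^{(N)*}$ and hence $\sum_{i=1}^{j}\xi_i^{(N+1)*} \geq \sum_{i=1}^{j}\xi_i^{(N)*}$, so $v_j^{(N+1)} \geq v_j^{(N)}$; substituting into the recursion yields $u_{j+1}^{(N+1)} \leq u_{j+1}^{(N)}$. At $j=N$ this gives $\sum_{i=1}^{N}\xi_i^{(N+1)*} \geq \sum_{i=1}^{N}\xi_i^{(N)*}$. Invoking \eqref{eq:last_element_reside_equality} at both levels rewrites these sums as $B-(w+1)\xi_{N+1}^{(N+1)*}$ and $B-w\xi_N^{(N)*}$, respectively, so the partial-sum inequality delivers the first bound $\xi_{N+1}^{(N+1)*} \leq \tfrac{w}{w+1}\xi_N^{(N)*}$.

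For the opposing bound, I apply the recursion at $j=N$ to the $(N{+}1)$-sequence: $(1-p)u_{N+1}^{(N+1)} = u_N^{(N+1)} - p\, v_N^{(N+1)}$. Using $u_N^{(N+1)} \leq u_N^{(N)}$ and $v_N^{(N+1)} \geq v_N^{(N)}$ from the induction, together with the boundary identity $u_N^{(N)}=v_N^{(N)}$, the right-hand side is bounded by $u_N^{(N)} - p\, v_N^{(N)} = (1-p)u_N^{(N)}$, giving $u_{N+1}^{(N+1)} \leq u_N^{(N)}$, i.e., $\xi_{N+1}^{(N+1)*} \geq \xi_N^{(N)*}$. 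Since $\xi_N^{(N)*} > 0$ by Corollary~\ref{corol:nonzero_last_term}, the two bounds are incompatible, forcing $\xi_1^{(N+1)*} < \xi_1^{(N)*}$. An identical forward induction, now with all inequalities strictly reversed (a smaller $\xi_j$ means a larger $u_j$ and a smaller $v_j$), then propagates the strict inequality $\xi_j^{(N+1)*} < \xi_j^{(N)*}$ to every $j \leq N$, establishing the lemma.

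The step I expect to be the main obstacle is recognizing that \eqref{eq:last_element_reside_equality} is precisely the identity $u_N^{(N)} = v_N^{(N)}$: this boundary condition is what pins the recursion-based lower bound against the energy-budget upper bound, closing the loop. Without this identification, the two directions of monotone propagation (through the recursion and through the global budget) stay disconnected and no contradiction is produced.
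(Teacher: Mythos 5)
Your proof is correct, and its skeleton is the same as the paper's: assume $\xi_1^{(N+1)*}\geq\xi_1^{(N)*}$, propagate this through the recursion \eqref{eq:KKT:recursive} to get $\xi_j^{(N+1)*}\geq\xi_j^{(N)*}$ for all $j\leq N$, derive a contradiction at the tail, and then rerun the propagation with strict reversed inequalities. The only difference is how the contradiction is closed: the paper uses the single chain $\xi_N^{(N+1)*}<\frac{1}{w}(B-\sum_{i=1}^{N}\xi_i^{(N+1)*})\leq\frac{1}{w}(B-\sum_{i=1}^{N}\xi_i^{(N)*})=\xi_N^{(N)*}$, invoking the strict inequality of Lemma~\ref{lem:Inequality} for the $(N{+}1)$-problem at index $N$ together with \eqref{eq:last_element_reside_equality}, whereas you avoid Lemma~\ref{lem:Inequality} entirely and instead pin $\xi_{N+1}^{(N+1)*}$ between two bounds obtained from the budget identity \eqref{eq:last_element_reside_equality} at both levels and one further application of the recursion at $j=N$; both closings are valid and of comparable length, with the paper's being marginally more direct.
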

\begin{proof}
Let consider two different assumptions for $\xi_1^{(N)*}$ and $\xi_1^{(N+1)*}$. First, assume that $\xi_1^{(N+1)*} \geq \xi_1^{(N)*}$. If this is the case, from~\eqref{eq:KKT:recursive}, we can deduce
\begin{equation} \label{eq:UB-decrease-N:contradic0}
  \xi_j^{(N+1)*} \geq \xi_j^{(N)*} \qquad \text{for any} \qquad j\in\{1,2, \cdots, N\}.
\end{equation}
However, this conclusion contradicts the following inequality for $j = N$.
\begin{eqnarray}
  \xi_N^{(N+1)*} &<& \frac{B- \sum_{i=1}^{N} \xi_i^{(N+1)*}}{w} \label{eq:UB-decrease-N:contradic1}\\
               &\leq& \frac{B- \sum_{i=1}^{N} \xi_i^{(N)*}}{w} \label{eq:UB-decrease-N:contradic2}\\
               &=& \xi_N^{(N)*} \label{eq:UB-decrease-N:contradic3}\\
\end{eqnarray}
where~\eqref{eq:UB-decrease-N:contradic1} follows from~\eqref{eq:lem:Inequality-1} (of Lemma~\ref{lem:Inequality}), \eqref{eq:UB-decrease-N:contradic2} follows from~\eqref{eq:UB-decrease-N:contradic0}, and \eqref{eq:UB-decrease-N:contradic3} follows from~\eqref{eq:last_element_reside_equality}. Hence, the assumption $\xi_1^{(N+1)*} < \xi_1^{(N)*}$ is correct. Therefore, from~\eqref{eq:KKT:recursive}, we can deduce
\begin{equation*}
  \xi_j^{(N+1)*} < \xi_j^{(N)*} \qquad \text{for any} \qquad j\in\{1,2, \cdots, N\}.
\end{equation*}
\end{proof}
\begin{lem} \label{Lem:deacrease:battery-rem-LB}
Sequence $\{B^{(N)}\}_{N=1}^\infty$ is a \emph{strictly} decreasing function of $N$.
\end{lem}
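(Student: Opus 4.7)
The plan is to combine Lemma~\ref{Lemma:KKT:recursive} with Lemmas~\ref{lem:decrease:j} and \ref{lem:decrease:N} to reduce the monotonicity of $\{B^{(N)}\}$ to a monotonicity statement about the last element $\xi_N^{(N)*}$ of each optimal sequence.

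First, I would invoke identity~\eqref{eq:last_element_reside_equality} from Lemma~\ref{Lemma:KKT:recursive} to rewrite $B^{(N)}$ purely in terms of the last element of the optimal sequence:
\begin{equation*}
B^{(N)} = w\,\xi_N^{(N)*}, \qquad B^{(N+1)} = w\,\xi_{N+1}^{(N+1)*}.
\end{equation*}
Thus showing $B^{(N+1)} < B^{(N)}$ is equivalent to showing $\xi_{N+1}^{(N+1)*} < \xi_N^{(N)*}$.

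Next, I would chain two already established strict inequalities. By Lemma~\ref{lem:decrease:j} applied within the $(N+1)$-dimensional optimization, the sequence $\{\xi_j^{(N+1)*}\}_{j=1}^{N+1}$ is strictly decreasing, so
\begin{equation*}
\xi_{N+1}^{(N+1)*} < \xi_N^{(N+1)*}.
\end{equation*}
By Lemma~\ref{lem:decrease:N} with the index $j=N$ fixed, passing from the $N$-dimensional to the $(N+1)$-dimensional optimum strictly decreases the $N$-th component:
\begin{equation*}
\xi_N^{(N+1)*} < \xi_N^{(N)*}.
\end{equation*}
Concatenating these two strict inequalities gives $\xi_{N+1}^{(N+1)*} < \xi_N^{(N)*}$, and multiplying by $w>0$ yields $B^{(N+1)} < B^{(N)}$.

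There is essentially no obstacle here; all the nontrivial work has already been done in the preceding lemmas. The only care needed is to be sure that Lemma~\ref{lem:decrease:N} applies at the index $j=N$ (it does, since its statement covers all $j \in \{1,\ldots,N\}$ when comparing the $N$- and $(N{+}1)$-dimensional optima), and that~\eqref{eq:last_element_reside_equality} is valid for every $N \in \mathds{N}$, which follows from the KKT derivation in Lemma~\ref{Lemma:KKT:recursive} without additional hypotheses.
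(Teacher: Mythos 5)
Your proof is correct and follows exactly the same route as the paper: it combines $\xi_{N+1}^{(N+1)*} < \xi_N^{(N+1)*}$ (Lemma~\ref{lem:decrease:j}) with $\xi_N^{(N+1)*} < \xi_N^{(N)*}$ (Lemma~\ref{lem:decrease:N}) and converts the resulting inequality into $B^{(N+1)} < B^{(N)}$ via~\eqref{eq:last_element_reside_equality}. No issues.
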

\begin{proof}
From Lemma~\ref{lem:decrease:N}, we conclude that $\xi_N^{(N+1)*} < \xi_N^{(N)*}$. On the other hand, from Lemma~\ref{lem:decrease:j}, we conclude that $\xi_{N+1}^{(N+1)*} < \xi_N^{(N+1)*}$. Hence, inequality $\xi_{N+1}^{(N+1)*} < \xi_N^{(N)*}$ leads to inequality $B^{(N+1)} < B^{(N)}$ due to~\eqref{eq:last_element_reside_equality}.
\end{proof}

\begin{corol} \label{corol:B_rem-limit:LB}
The limit of sequence $\{B^{(N)}\}_{N=1}^\infty$ exists as follows.
      \begin{equation*}
        \lim_{N\rightarrow \infty} B^{(N)} = 0\: .
      \end{equation*}
\end{corol}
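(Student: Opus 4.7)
The plan is to combine Lemma~\ref{Lem:deacrease:battery-rem-LB} (monotonicity of $\{B^{(N)}\}$), the positivity bound $B^{(N)}>0$ from~\eqref{eq:positive-remain-battery}, and the identity $\xi_N^{(N)*}=B^{(N)}/w$ from~\eqref{eq:last_element_reside_equality}, together with the strict decrease of $\{\xi_j^{(N)*}\}_{j=1}^N$ in $j$ (Lemma~\ref{lem:decrease:j}), to force the limit to be $0$ by a volume/budget contradiction.

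First I would observe that $\{B^{(N)}\}_{N=1}^\infty$ is strictly decreasing by Lemma~\ref{Lem:deacrease:battery-rem-LB} and bounded below by $0$ via~\eqref{eq:positive-remain-battery}, so the limit $B^{*}\triangleq \lim_{N\to\infty}B^{(N)}\geq 0$ exists. It then suffices to rule out $B^{*}>0$.

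Next I would argue by contradiction: assume $B^{*}>0$. Fix any $N\in\mathds{N}$. By~\eqref{eq:last_element_reside_equality} we have $\xi_N^{(N)*}=B^{(N)}/w\geq B^{*}/w$. By Lemma~\ref{lem:decrease:j}, the finite sequence $\{\xi_j^{(N)*}\}_{j=1}^N$ is strictly decreasing in $j$, so $\xi_j^{(N)*}\geq \xi_N^{(N)*}\geq B^{*}/w$ for every $j\in\{1,\ldots,N\}$. Summing these $N$ lower bounds and using the feasibility constraint~\eqref{eq:N-dimensional-optimization:energy-constraint} yields
\begin{equation*}
N\cdot \frac{B^{*}}{w}\;\leq\;\sum_{j=1}^{N}\xi_j^{(N)*}\;=\;B-B^{(N)}\;\leq\;B,
\end{equation*}
hence $B^{*}\leq wB/N$. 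Letting $N\to\infty$ gives $B^{*}\leq 0$, contradicting $B^{*}>0$. Therefore $B^{*}=0$, which is the claim.

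The only potentially subtle point is making sure that the chain $\xi_j^{(N)*}\geq B^{(N)}/w\geq B^{*}/w$ is valid for every $N$ simultaneously; this is immediate from the monotone convergence $B^{(N)}\downarrow B^{*}$ together with~\eqref{eq:last_element_reside_equality} and Lemma~\ref{lem:decrease:j}, so no real obstacle arises and no further machinery beyond what is already established is needed.
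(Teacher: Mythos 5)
Your proof is correct and uses essentially the same argument as the paper: both exploit that $\xi_N^{(N)*}=B^{(N)}/w$ is the smallest element of the strictly decreasing sequence, so the budget constraint $\sum_{j=1}^N\xi_j^{(N)*}\leq B$ forces $B^{(N)}\leq wB/N$. The paper states this as a direct bound and lets $N\to\infty$, whereas you package the identical estimate as a contradiction against a positive limit; the difference is purely presentational.
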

\begin{proof}
Sequence $\{B^{(N)}\}_{N=1}^\infty$ is a strictly decreasing function of $N$, due to Lemma~\ref{Lem:deacrease:battery-rem-LB}, and positive due to~\eqref{eq:positive-remain-battery}. Hence, it converges to a non-negative real number. To find the limit, we can bound $B^{(N)}$ as follows.
\begin{eqnarray}
  B^{(N)} &=& w \xi^{(N)*}_{N} \label{Eq:Lem:Brem_UB-1}\\
   &<& \frac{w}{N} (\sum_{i=1}^{N} \xi^{(N)*}_{i}) \label{Eq:Lem:Brem_UB-2}\\
   &\leq& \frac{w}{N} B \label{Eq:Lem:Brem_UB-3}\:,
\end{eqnarray}
where \eqref{Eq:Lem:Brem_UB-1} follows from~\eqref{eq:last_element_reside_equality}, \eqref{Eq:Lem:Brem_UB-2} follows from Lemma~\ref{lem:decrease:j}, and \eqref{Eq:Lem:Brem_UB-3} follows from energy constraint~\eqref{eq:N-dimensional-optimization:energy-constraint}.
The limit is concluded from \eqref{Eq:Lem:Brem_UB-3}.
\end{proof}
\subsection{Properties of sequence $\{\xi_j^*\}_{j=1}^\infty$.}
\begin{lem} \label{Lem:limits:LB}
The limits of sequence $\{\xi_j^{(N)*}\}_{N=j}^\infty$ for any $j\in \mathds{N}$, where $j$ is fixed, exists as follows.
      \begin{equation}\label{eq:E_j^*}
        \xi_j^* = \lim_{N\rightarrow \infty} \xi_j^{(N)*} \qquad \text{for} \qquad j\in\{1, \ldots, N\},
      \end{equation}
      where $\xi_j^*$ is element $j^{th}$ of the maximizing sequence of $\mathcal{T}_\infty$.
\end{lem}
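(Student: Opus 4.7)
The argument will proceed in three stages: first extract a coordinate-wise limit by monotonicity, then verify that this limit is admissible, and finally identify it with the unique maximizer of $\mathcal{T}_\infty$ via a dominated-convergence argument.

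First, fix $j\in\mathds{N}$. By Lemma~\ref{lem:decrease:N} the sequence $\{\xi_j^{(N)*}\}_{N\geq j}$ is strictly decreasing in $N$, and by Corollary~\ref{corol:nonzero_last_term} together with Lemma~\ref{lem:decrease:j} every term is strictly positive; hence the pointwise limit $\eta_j\triangleq\lim_{N\to\infty}\xi_j^{(N)*}$ exists and lies in $[0,B]$. For admissibility, the energy constraint~\eqref{eq:N-dimensional-optimization:energy-constraint} yields $\sum_{j=1}^{M}\xi_j^{(N)*}\leq B$ for every $M\leq N$; letting $N\to\infty$ in this finite sum gives $\sum_{j=1}^{M}\eta_j\leq B$, and then letting $M\to\infty$ gives $\sum_{j=1}^{\infty}\eta_j\leq B$, so the sequence $\{\eta_j\}_{j=1}^\infty$ is admissible.

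Next I will show $\mathcal{T}_\infty(\eta_1,\eta_2,\ldots)=\mathcal{T}_\infty^*$ and then invoke uniqueness of the maximizer. The key identity is
\[
\mathcal{T}_N^*=\mathcal{T}_N(\xi_1^{(N)*},\ldots,\xi_N^{(N)*})=\mathcal{T}_\infty(\xi_1^{(N)*},\ldots,\xi_N^{(N)*},0,0,\ldots),
\]
whose left-hand side converges to $\mathcal{T}_\infty^*$ by Lemma~\ref{Lem:Limits-of-target-function-LB}(c). I will then apply the dominated convergence theorem to each of the two infinite series in Definition~\ref{def:Theta_inf}: on the admissible set every log factor is bounded above by $\max\{\log(1+\gamma B),\log(1+\gamma B/w)\}$, while the weights $p(1-p)^{j+w-1}$ and $p^2(1-p)^{k+w-1}$ form summable geometric majorants. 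Combined with the coordinate-wise convergence $\xi_j^{(N)*}\to\eta_j$ (which also yields convergence of every partial sum $\sum_{j=1}^{k}\xi_j^{(N)*}\to\sum_{j=1}^{k}\eta_j$), this gives $\mathcal{T}_\infty(\xi_1^{(N)*},\ldots,\xi_N^{(N)*},0,0,\ldots)\to\mathcal{T}_\infty(\eta_1,\eta_2,\ldots)$. Equating the two limits yields $\mathcal{T}_\infty(\{\eta_j\})=\mathcal{T}_\infty^*$, and strict concavity of $\mathcal{T}_\infty$ (already used for the uniqueness of the maximizer $\{\xi_j^*\}$ in the discussion following Theorem~\ref{Theorem:Main-Result}) then forces $\eta_j=\xi_j^*$ for every $j$.

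The main obstacle is justifying the interchange of limit and summation in the last step, namely exhibiting a single summable majorant uniformly over the truncated arguments $(\xi_1^{(N)*},\ldots,\xi_N^{(N)*},0,0,\ldots)$ as $N$ varies. Fortunately all these arguments live in the common admissible domain, so each log summand is bounded by a constant depending only on $B$, $\gamma$, and $w$, and the geometric weights then supply the required majorant; the rest of the argument is routine.
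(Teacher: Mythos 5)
Your proposal is correct and follows essentially the same route as the paper: extract the coordinate-wise limit from the monotonicity in $N$ (Lemma~\ref{lem:decrease:N}) and positivity (Lemma~\ref{lem:decrease:j}), show the limit sequence attains $\mathcal{T}_\infty^*$ by combining $\mathcal{T}_N^*\to\mathcal{T}_\infty^*$ with convergence of $\mathcal{T}_\infty$ evaluated at the truncated optimizers, and conclude by uniqueness of the maximizer from strict concavity. The only difference is cosmetic: where the paper sandwiches $\mathcal{T}_\infty(\alpha_1,\alpha_2,\ldots)\geq\mathcal{T}_{N_0}^*\geq\mathcal{T}_\infty^*-\epsilon_{N_0}$ using the explicit tail bound from Lemma~\ref{Lem:Limits-of-target-function-LB}, you equate the two limits directly and justify the interchange of limit and summation via a dominated-convergence majorant, which makes explicit a continuity step the paper only asserts.
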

\begin{proof}
Sequence $\{\xi_j^{(N)*}\}_{N=j}^\infty$, for any fixed $j\in \mathds{N}$, consists of positive elements due to Lemma~\eqref{lem:decrease:j} and it strictly descends as a function of $N$ due to Lemma~\ref{lem:decrease:N}. Hence, for each $j\in \mathds{N}$, there exits a finite number $\alpha_j$ such that
\begin{equation} \label{eq:Lem:limits-E_j^(N)*:LB-0}
  \alpha_j = \lim_{N\rightarrow \infty} \xi_j^{(N)*}\:.
\end{equation}
In the following, we prove that $\{\alpha_j\}_{j=1}^\infty$ is the unique maximizer of $\mathcal{T}_\infty$. For any fixed $N_0$, we have
\begin{eqnarray}
\mathcal{T}_\infty(\alpha_1, \alpha_2, \cdots) &=& \lim_{N\rightarrow \infty} \mathcal{T}_N(\xi_1^{(N)*}, \xi_2^{(N)*}, \cdots, \xi_N^{(N)*}) \label{eq:Lem:limits-E_j^(N)*:LB-1}\\
&\geq& \mathcal{T}_{N_0}^*(\xi_1^{(N_0)*}, \xi_2^{(N_0)*}, \cdots, \xi_{N_0}^{(N_0)*}) \label{eq:Lem:limits-E_j^(N)*:LB-2}\\
&\geq& \mathcal{T}_\infty^*(\xi_1^*, \xi_2^*, \cdots) - \epsilon_{N_0} \label{eq:Lem:limits-E_j^(N)*:LB-3}
\end{eqnarray}
where~\eqref{eq:Lem:limits-E_j^(N)*:LB-1} follows from Lemma~\ref{Lem:Limits-of-target-function-LB}-(part C) and \eqref{eq:Lem:limits-E_j^(N)*:LB-0}, \ref{eq:Lem:limits-E_j^(N)*:LB-2} follows from Lemma~\ref{Lem:Limits-of-target-function-LB}-(part B), \eqref{eq:Lem:limits-E_j^(N)*:LB-3} follows from~\eqref{eq:Lem:Target-Function-limits}.
Now, if $N_0 \rightarrow \infty$, then $\epsilon_{N_0} \rightarrow 0$ due to~\eqref{eq:Lem:Target-Function-limits}, and thus \eqref{eq:Lem:limits-E_j^(N)*:LB-3} contradicts with Lemma~\ref{Lem:Limits-of-target-function-LB}-(part A) unless $\alpha_j = \xi_j^*$, or equivalently,
\begin{equation}\label{eq:Lem:limits-E_j^(N)*:LB-main}
  \xi_j^* = \lim_{N\rightarrow \infty} \xi_j^{(N)*}\:.
\end{equation}
\end{proof}
\indent Eventually, in Theorem~\ref{Theorem:Main-Result}, \eqref{eq:E_j^*_recursive} follows directly from~\eqref{eq:KKT:recursive} (of Lemma~\ref{Lemma:KKT:recursive}) and \eqref{eq:Lem:limits-E_j^(N)*:LB-main} (of Lemma~\ref{Lem:limits:LB}). Also, the energy constraint~\eqref{eq:energy_constraint-infinit} follows from Corollary~\ref{corol:B_rem-limit:LB} and~\eqref{eq:Lem:limits-E_j^(N)*:LB-main}. Also, \eqref{eq:theorem:property} follows from Lemma~\ref{lem:Inequality} and~\eqref{eq:Lem:limits-E_j^(N)*:LB-main}. The fact that $\{\xi_j^*\}_{j=1}^N$ is a strictly decreasing sequence follows from
    \begin{align}
    \frac{1-p}{1+\gamma \xi^*_j}&=\frac{1}{1+\gamma \xi^*_{j-1}}-\frac{p}{1+\frac{\gamma}{w}(B-\sum_{i=1}^{j-1} \xi^*_i)} \label{eq:proof:dec-E^*_j-1}\\
    &>\frac{1}{1+\gamma \xi^*_{j-1}}-\frac{p}{1+\gamma \xi^*_{j-1}} \label{eq:proof:dec-E^*_j-2}\\
    &=\frac{1-p}{1+\gamma \xi^*_{j-1}} \nonumber,
    \end{align}
    where~\eqref{eq:proof:dec-E^*_j-1} follows from~\eqref{eq:E_j^*_recursive}, and \eqref{eq:proof:dec-E^*_j-2} follows from~\eqref{eq:theorem:property}. This monotone property also approves the positivity of sequence $\{\xi_j^*\}_{j=1}^N$.\\

\indent In Fig.~\ref{fig: optimal-Seq}, $\{\xi_j^{(N)*}\}_{j=1}^N$ (following by trailing zeros) is sketched over time for different values of $N$, when the parameters of the system are as given. This figure numerically illustrates the results of this section.

\begin{remark}
After an energy arrival, the optimal policy $\pi^*(w)$ assigns a strictly positive energy sequence $\{\xi_j^*\}_{j=1}^\infty$ to time slots according to Algorithm~\ref{Fig:Algorithm} until an arrival is observed in the window. That is, as long as no energy arrival is observed in the window, the TX always sends a positive portion of the battery energy at each time slot, but it never exhausts the battery energy. However, the online optimal policies~\cite{ozgur16, Ulukus2017-ISIT, arafa2018} allocate positive energy only to a fixed number of time slots after any energy arrival and then the battery becomes depleted.
\end{remark}
\section{Conclusions}  \label{Sec:Conclusion}
In this paper, we have introduced a new EH system model, where the TX has the knowledge of future energy arrivals in a look-ahead window. This model provides a bridge between the online model and the offline model, which have been previously studied in isolation. A complete characterization of the optimal power control policy is obtained for the new model with Bernoulli energy arrivals, which has notable differences with its counterpart for the online model.
\begin{figure}[ht!]
  \centering
  \includegraphics[width = 9 cm]{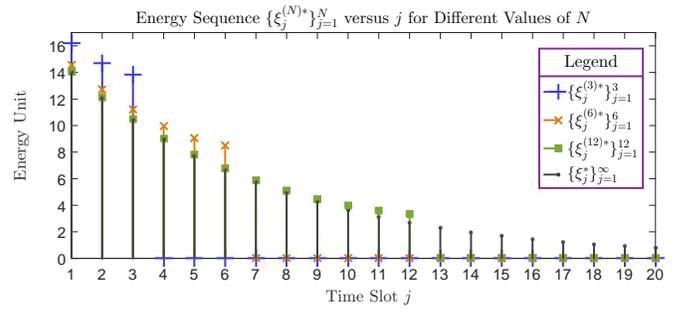}
  \caption{Sequence $\xi_j^{(N)*}$ versus time $j$ ($1\leq j \leq 20$) for model parameters $p = 0.3$, $\gamma= 0.5$,  battery size $B = 100$, and window size $w=4$.}\label{fig: optimal-Seq}
\end{figure}
\bibliographystyle{IEEEtran}
\bibliography{EH}
\end{document}